\documentclass[11pt,a4paper]{article}
\usepackage{preprint}
\usepackage{mathtools,booktabs,array}
\usepackage{xcolor}
\hypersetup{colorlinks=true,linkcolor=blue,citecolor=blue,urlcolor=blue}

\newcommand{\F}{\mathbb F}
\newcommand{\E}{\mathbb E}
\newcommand{\Prb}{\mathbb P}
\newcommand{\1}{\mathbf 1}
\newcommand{\polylog}{\operatorname{polylog}}
\newcommand{\wtO}{\widetilde O}
\newcommand{\wtTheta}{\widetilde\Theta}

\title{Quantum Query Complexity of Persistence Statistics in Graph Zigzags}
\author{Cheng Xin\\[0.4em]\small Department of Computer Science\\\small California State University, Fresno\\\small\texttt{cxin@mail.fresnostate.edu}}
\date{}
\AtBeginDocument{\hypersetup{
  pdftitle={Quantum Query Complexity of Persistence Statistics in Graph Zigzags},
  pdfauthor={Cheng Xin},
  pdfsubject={Preprint: quantum query complexity of persistence statistics},
  pdfkeywords={zigzag persistence, persistence statistics, temporal graphs,
  quantum query complexity, randomized query complexity}}}

\begin{document}
\maketitle
\begin{abstract}
Temporal graph data is often stored as snapshots, and many uses of zigzag
persistence need only a scalar summary of bar lifetimes rather than the
barcode.  We study the query complexity of estimating such summaries from
snapshot-adjacency bits.

For graphs $G_1,\ldots,G_m$ on $n$ labeled vertices, let $\ell_b$ be the
snapshot lifetime of a degree-one bar $b$ of the intersection zigzag.  For a
probability generating function $\phi(x)=\E[x^R]$, the statistic
$F_\phi=\sum_b\phi(\ell_b/m)$ includes normalized degree-$r$ total persistence
and the mean generalized rank over a uniform time window.  Our starting point
is an exact identity: sample $R$, then $R$ uniform time indices; the expected
generalized rank of the window between their minimum and maximum equals
$F_\phi$.  For graphs that rank is the circuit rank of an intersection graph,
so a nonlinear barcode functional becomes an average of edge and component
counts, and no barcode is computed.

Without spectral-gap, homology-state, or QRAM assumptions, this gives a
quantum estimator with additive error $\varepsilon n$ and
$\widetilde O(\sqrt{m(K+n)}/\varepsilon)$ queries when a bound $K\ge F_\phi$ is
supplied, against $\widetilde O(m\min\{n^2,(K+n)/\varepsilon^2\})$
classically, and an adaptive quantum variant with the same instance
dependence.  These estimators are optimal in two regimes.  For every fixed
power weight $x^r$, $r\ge2$, and for the uniform-window mean, the worst-case
complexities are $\widetilde\Theta(n\sqrt m/\varepsilon)$ quantum and
$\Theta(n^2m)$ classical.  On sparse instances, under an explicit
split-leakage promise met by power and binomial weights of logarithmic degree
and the promise $F_\phi\le K$, they are $\widetilde\Theta(\sqrt{mK}/\varepsilon)$
and $\widetilde\Theta(m\min\{n^2,K/\varepsilon^2\})$.  The classical lower
bounds hold against fully adaptive algorithms, and fewer than $m$ such
statistics cannot determine the positive-lifetime histogram.  All bounds
concern snapshot access; with an explicit update stream, near-linear
full-barcode algorithms are known.
\end{abstract}

\section{Introduction}\label{sec:intro}

Zigzag persistence records how the topological features of a time-varying
space appear, disappear, and reappear~\cite{CarlssonDeSilva2010,CarlssonDeSilvaMorozov2009}.
Its output is a barcode, a multiset of intervals with one interval per
feature, whose length is the feature's lifetime.  In many applications the
barcode is only an intermediate object: what enters the downstream analysis is
a scalar summary of the lifetimes, such as total persistence
~\cite{CohenSteinerEtAl2010,Bubenik2015} or a linear representation whose
weight depends on persistence alone
~\cite{Bubenik2015,ChazalSilhouette,ChungLawson,DivolPolonik2019,WuKimRinaldo2024}.
This paper asks how many bits of a temporal graph a quantum or a randomized
algorithm must read to estimate such a summary, and how much of the barcode it
must implicitly compute.

We study graph zigzags, where the complete barcode is computable in
near-linear time from an explicit update stream~\cite{DeyHou2021,DeyHou2022},
but under a more primitive representation.  The temporal graph is a sequence
$G_1,\ldots,G_m$ of adjacency snapshots on $n$ labeled vertices, and a query
returns one snapshot bit: whether a given vertex pair is an edge at a given
time.  A quantum algorithm may query in superposition; a randomized algorithm
may query adaptively.  Nothing else is supplied: no update stream, no index of
edge lifetimes, no component oracle, and no quantum random-access memory.
This is the natural access model for snapshot data, and the one in which
sublinear estimation is a meaningful question, because even deciding whether
one edge is present throughout a window of $h$ snapshots is a search over $h$
bits.

\smallskip\noindent\textbf{The statistics.}
The intersection zigzag
$G_1\leftarrow G_1\cap G_2\rightarrow G_2\leftarrow\cdots\rightarrow G_m$ has a
degree-one barcode whose bars $b$ have snapshot lifetimes $\ell_b$.  For a
probability generating function $\phi(x)=\E[x^R]$ with
$R\in\{1,\ldots,r_{\max}\}$ we study $F_\phi(\mathbf G)=\sum_b\phi(\ell_b/m)$.
The family contains normalized degree-$r$ total persistence
$m^{-r}\sum_b\ell_b^r$, obtained from $R\equiv r$, and the mean generalized
rank over a uniformly random time window, obtained from a two-point law on
$\{1,2\}$.  The task is to output a number within additive error
$\varepsilon n$ of $F_\phi(\mathbf G)$, for $1/n\le\varepsilon\le1/4$, with
bounded failure probability.

\smallskip\noindent\textbf{The key idea.}
Sample $R$, then $R$ independent uniform times, and let $I$ be the window
between the smallest and the largest of them.  A bar of lifetime $\ell$
contains $I$ exactly when all $R$ samples fall inside the bar, which has
probability $(\ell/m)^R$; averaging over $R$ gives $\phi(\ell/m)$.  Summing
over bars,
\begin{equation}\label{eq:intro-identity}
F_\phi(\mathbf G)=\E_I\,\rho_{\mathbf G}(I),
\end{equation}
where $\rho_{\mathbf G}(I)$, the generalized rank of the window, is the number
of bars containing $I$ (Theorem~\ref{thm:extrema}); the window law does not
depend on the input.  For graphs the rank is the circuit rank of the
intersection graph $G_I=\bigcap_{t\in I}G_t$ (Lemma~\ref{lem:window-rank}), so
$F_\phi(\mathbf G)=\E_I|E(G_I)|-n+\E_I\,c(G_I)$, an average edge count plus an
average component count.  Each is a sublinear estimation problem, and testing
whether an edge survives $I$ is a search for a missing snapshot, which costs
$O(\sqrt{|I|})$ quantum queries and $|I|$ classical ones.  No barcode is ever
computed.

\smallskip\noindent\textbf{Results.}
Let $B_n=\binom{n-1}{2}$, the largest circuit rank of an $n$-vertex graph, so
$0\le F_\phi\le B_n$; $\wtO$ and $\wtTheta$ hide factors polylogarithmic in
$n$, $m$, $1/\varepsilon$, and the failure probability.

\begin{theorem}[Informal; Theorems~\ref{thm:q-supplied},
\ref{thm:q-adaptive}, and~\ref{thm:c-supplied}]
\label{thm:intro-upper}
Let $K\ge F_\phi(\mathbf G)$ be supplied.  An additive-$\varepsilon n$
estimate of $F_\phi(\mathbf G)$ can be computed with
$\wtO(\sqrt{m(K+n)}/\varepsilon)$ quantum and
$\wtO(m\min\{n^2,(K+n)/\varepsilon^2\})$ randomized snapshot queries.  Without
a supplied bound, a quantum algorithm attains the same expression with $K$
replaced by $F_\phi(\mathbf G)$, in expectation and with high probability,
under a deterministic cap of $\wtO(n\sqrt m/\varepsilon)$.
\end{theorem}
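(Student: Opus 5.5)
The plan starts from the identity~\eqref{eq:intro-identity} together with the circuit-rank formula. Together they give
\[
F_\phi(\mathbf G)=\E_I|E(G_I)|-n+\E_I\,c(G_I).
\]
It then suffices to estimate $A=\E_I|E(G_I)|$ and $C=\E_I c(G_I)$, each to additive error $\varepsilon n/3$. Both are averages over the joint law of (window $I$, a random pair or random vertex). That law is input-independent and can be sampled at no query cost. For each term we build a bounded random variable, or a reversible procedure, whose mean is the target, and then apply amplitude estimation in the quantum case and Chebyshev/median-of-means in the classical case.

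\emph{Edge term.} Let $(I,\{u,v\})$ be drawn with $\{u,v\}$ uniform among the $N=\binom n2$ pairs. The indicator $X=\1[uv\in E(G_t)\ \forall t\in I]$ has mean $A/N$. Testing $X$ is an OR over $|I|\le m$ bits. Grover search with error reduction does this in $\wtO(\sqrt m)$ queries, and it can be run coherently. Amplitude estimation of a mean $p=A/N$ to additive error $\delta=\varepsilon n/(3N)$ costs $O(\sqrt p/\delta+1/\sqrt\delta)$ calls. The number of surviving edges is at most a forest plus the circuit rank, so $A\le \E_I\rho(I)+n-1\le K+n$, and hence $p\le (K+n)/N$. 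This gives $\wtO\bigl(\sqrt{(K+n)N}/(\varepsilon n)\bigr)=\wtO(\sqrt{K+n}/\varepsilon)$ calls, for a total of $\wtO(\sqrt{m(K+n)}/\varepsilon)$ queries. The $1/\sqrt\delta$ term is lower order because $\varepsilon\ge1/n$. Classically, each sample costs $m$ queries and has variance $\le p$, so $O(pN^2/(\varepsilon n)^2)=O((K+n)/\varepsilon^2)$ samples suffice. The alternative is to read all $mn^2$ bits, and we take the minimum of the two.

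\emph{Component term.} We use $c(G)=\sum_v 1/|{\rm comp}(v)|$, truncated in the style of Chazelle--Rubinfeld--Trevisan. Pick a uniform vertex $v$ and $I$. Explore the component of $v$ in $G_I$ until it reaches size $s=\lceil 3/\varepsilon\rceil$, and output $1/\min(|{\rm comp}|,s)$. The truncation bias is at most $n/s\le\varepsilon n/3$. Exploration must find the neighbors of each visited vertex, which means searching among $n$ candidate pairs for ones that survive all of $I$. Quantum search for all $d$ surviving neighbors costs $\wtO(\sqrt{nd\,m})$. Summed over an explored component of size $\le s$ with $e$ surviving edges, the cost is $\wtO(\sqrt{nm\,s\,e})$. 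We need the expected edge count of explored components to be bounded in terms of $(K+n)/n$ per vertex. The route is the same observation as before: surviving edges inside a size-$\le s$ component number at most $s-1$ plus its cycle rank, and the cycle ranks sum to at most $\rho(I)$. Variable-time amplitude estimation (Ambainis) then yields $\wtO(\sqrt{m(K+n)}/\varepsilon)$. Classically, exploration costs $m\cdot n\cdot s$ per sample. A cheaper accounting charges $m$ per surviving-edge check only through a random-order BFS, giving $\wtO(m(K+n)/\varepsilon^2)$ after $O(1/\varepsilon^2)$ samples. This component bound is the step I expect to be the main obstacle. Making the per-sample cost depend on $K$ rather than $n s$ needs a careful amortization of failed neighbor searches, and the quantum variable-time analysis has to control the second moment of the running time.

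\emph{Adaptive variant.} Without $K$, run a doubling search $K=n,2n,4n,\dots$ up to $B_n$. At each stage, first compute a coarse estimate of $F_\phi$ to accuracy $\Theta(K)$, or of $A$ via quantum counting, and stop once the estimate certifies $F_\phi\le K$. With geometric costs, the total is dominated by the last stage. That stage has $K=O(F_\phi+n)$, both in expectation and with high probability. Capping $K\le B_n$ gives the deterministic bound $\wtO(n\sqrt m/\varepsilon)$.
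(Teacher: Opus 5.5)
Your reduction to $\E_I|E(G_I)|-n+\E_I c(G_I)$ and your edge-term estimators, quantum and classical, are the paper's. A doubling search over $K$ can stand in for the paper's dummy-augmented bootstrap, provided each test fails with polynomially small probability (e.g.\ $O(\delta/n^3)$), so that runs at the dense cap do not dominate the expected cost.

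The genuine gap is the component term. The difficulty is not amortizing failed searches but a mismatch of scales. With one truncation at $s=\lceil 3/\varepsilon\rceil$ you must estimate a $[0,1]$-valued mean to additive error $\varepsilon/3$. Whenever that mean is a constant, this takes $\Theta(1/\varepsilon)$ amplitude-estimation calls (or $\Theta(1/\varepsilon^2)$ samples), while a single exploration may have to discover $\Theta(1/\varepsilon)$ vertices.

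Concretely, let every $G_t$ be the same forest in which half the vertices are isolated and the rest lie on paths of $s$ vertices. Then $F_\phi=0$, so $K+n=\Theta(n)$ is admissible, and the mean is about $1/2$. By your own accounting ($\widetilde O(\sqrt m)$ snapshot queries per adjacency query of $G_I$), half of the branches cost $\Theta(s\sqrt{nm})$. The estimator therefore spends about $\sqrt{nm}/\varepsilon^2$ queries instead of $\widetilde O(\sqrt{nm}/\varepsilon)$. Since the root-mean-square running time is itself $\Theta(s\sqrt{nm})$ here, controlling the second moment, as you plan, does not help. The bound you wanted, $O((K+n)/n)$ expected edges in the explored component, fails here by a factor $1/\varepsilon$: the $s-1$ tree edges in your own estimate dominate, not the cycle ranks.

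Classically, $\Theta(1/\varepsilon^2)$ truncated BFS samples each scan $ns$ pairs. The random-order amortization is unjustified, because a pair missing from a single snapshot of $I$ is a non-edge of $G_I$ that costs $\Theta(|I|)$ queries to refute. The cost you can actually guarantee is $\min\{nm/\varepsilon^3,n^2m\}$, which exceeds $m(K+n)/\varepsilon^2$ when $K=O(n)$ and $n^{-1/2}\ll\varepsilon\ll1$.

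The missing idea is the paper's dyadic expansion $c(H)/n=\sum_{j<J}\E Z_j/a_j+R$, where $a_j=2^j$, $J=\lceil\log_2(16/\varepsilon)\rceil$, $0\le R\le\varepsilon/16$, and $Z_j(v)=(a_j/s_H(v))\mathbf 1[a_j\le s_H(v)<2a_j]$, with $s_H(v)$ the size of the component of $v$.
\begin{itemize}
\item Level $j$ explores only until $2a_j$ vertices are found. Each step is a search for a neighbor of the current vertex outside the explicit discovered list, so the level costs $\widetilde O(a_j\sqrt n)$ adjacency queries whatever the edge count (classically $O(na_j)$).
\item Because $\E Z_j$ is divided by $a_j$, it needs only precision $\alpha_j=\varepsilon a_j/(32J)$, i.e.\ $O(1/\alpha_j)$ amplitude-estimation calls or $O(1/\alpha_j^2)$ Hoeffding samples.
\end{itemize}
The factors $a_j$ cancel, giving $\widetilde O(\sqrt{nm}/\varepsilon)$ quantum and $\widetilde O(nm/\varepsilon^2)$ classical snapshot queries for the component term. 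These do not depend on $K$ and are absorbed by the edge term because $K+n\ge n$. So neither a $K$-sensitive bound on explored edges nor variable-time amplitude estimation is needed.
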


The quantum algorithm needs no spectral gap, no prepared homology state, and
no QRAM, assumptions that are central to quantum algorithms for Betti
numbers~\cite{LloydGarneroneZanardi2016,GunnKornerup2019,GyurikCadeDunjko2022,Hayakawa2022};
they are absent here because the target is a count.

\begin{theorem}[Informal; Theorem~\ref{thm:dense}]
\label{thm:intro-dense}
For every fixed power weight $\phi(x)=x^r$ with $r\ge2$, and for the
uniform-window mean, additive-$\varepsilon n$ estimation has quantum query
complexity $\wtTheta(n\sqrt m/\varepsilon)$ and randomized query complexity
$\Theta(n^2m)$.
\end{theorem}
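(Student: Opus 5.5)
Upper bounds come directly from Theorem~\ref{thm:intro-upper}. For $\phi(x)=x^r$ or the uniform-window mean we always have $F_\phi\le B_n=O(n^2)$, so we may supply $K=B_n$. The quantum cost is then $\wtO(\sqrt{m n^2}/\varepsilon)=\wtO(n\sqrt m/\varepsilon)$. On the classical side, reading all $\binom n2 m$ bits and computing $F_\phi$ exactly via the identity \eqref{eq:intro-identity} costs $O(n^2m)$. So the work is entirely in the two lower bounds, and I would prove both by one reduction from a composed problem.

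\textbf{Construction.} Partition the vertices into $\Theta(n)$ disjoint small gadgets, or use a fixed dense background graph. Around this, plant $N=\Theta(n^2)$ candidate edges $e_1,\ldots,e_N$ whose presence is controlled by hidden bits. Each candidate edge $e_j$ gets a block of $m$ snapshot bits, present at all times except possibly at one hidden time. The key choice is that a missing snapshot splits a long bar into shorter bars, ideally by cutting a cycle whose edge is otherwise present throughout.

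\textbf{Why this reduces to counting.} With $r\ge2$ (strict convexity of $x^r$), or with the uniform-window weight, a split bar contributes noticeably less than an intact one. A bar of lifetime $m$ contributes $1$, while two bars of lifetimes $\approx m/2$ contribute about $2^{1-r}\le 1/2$. Hence
\[
F_\phi=\text{const}-c\cdot\#\{j: e_j \text{ has a hole}\}
\]
for some $c=\Omega(1)$. This is why $r=1$ must be excluded: total lifetime is additive and does not see splits. To place the hole near the middle, I would restrict it to a middle window of length $\Theta(m)$.

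Estimating $F_\phi$ to within $\varepsilon n$ therefore estimates the number of holed edges among $N$ to within $\Theta(\varepsilon n)$. This is $\mathrm{Count}\circ\mathrm{OR}_m$, or more precisely approximate counting of $k$ versus $k+\Theta(\varepsilon n)$ among $N$ items, composed with search over $\Theta(m)$ bits.

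\textbf{Lower bounds from the composed problem.}
\begin{itemize}
\item \emph{Quantum.} Approximate counting with additive error $\Delta$ around $k\approx N/2$ needs $\Omega(\sqrt{N k}/\Delta)$ queries. This is $\Omega(n^2/(\varepsilon n))=\Omega(n/\varepsilon)$, and since $\varepsilon\ge 1/n$ it does not exceed $N$. Composition with $\mathrm{OR}_m$ multiplies the adversary bound by $\sqrt m$, because the general adversary bound is tight and composes multiplicatively. This gives $\Omega(n\sqrt m/\varepsilon)$.
\item \emph{Classical.} The situation is harder. Counting with error $\varepsilon n$ classically needs roughly $\min\{N,\,k/(\varepsilon n)^2\}$ samples. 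To reach $\Theta(n^2m)$ I would instead set the planted count to be small or to be the only signal, so that distinguishing requires essentially reading all bits. For example, distinguish $0$ holed edges from $\Theta(\varepsilon n)$ holed edges hidden among $n^2m$ positions. But then a randomized algorithm needs only about $n^2m/(\varepsilon n)$ queries.
\end{itemize}

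\textbf{Main obstacle.} Achieving $\Theta(n^2m)$ classically therefore requires amplifying each hole so that it affects $\Theta(n)$ bars. The natural idea is a cut: a missing edge in a bridge-like gadget disconnects a component and kills many cycles at once. So I would design gadgets in which a single missing snapshot in one pair destroys $\Omega(\varepsilon n)$ or even $\Omega(n)$ circuit-rank units for a middle window. A single planted hole among $n^2m$ bits then moves $F_\phi$ by more than $\varepsilon n$, giving an $\Omega(n^2m)$ adaptive lower bound by a standard needle-in-haystack argument. The quantum bound $\sqrt{n^2m}\cdot$(amplification) must still reach $n\sqrt m/\varepsilon$, so the gadget needs tunable amplification.

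Balancing these two requirements is the step I expect to be hardest. In particular, I need to verify via Lemma~\ref{lem:window-rank} that the rank change under the uniform-window law is $\Omega(1)$ per destroyed cycle.
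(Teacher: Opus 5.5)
Your upper bounds are the paper's: supply $K=B_n$ to the supplied-density estimators, or read every bit. Your gadget, with a connected background (the paper uses a spanning star) and $B_n$ chord rows that are either full or have one zero in the central window $\mathcal T_m$ of length $L\ge m/4$, is also the paper's. Both lower-bound arguments, however, fail as written.

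\textbf{Quantum lower bound.} The identity $F_\phi=\text{const}-c\cdot\#\{\text{holed rows}\}$ is false. A zero at time $t$ leaves $w(t)=((t-1)/m)^r+((m-t)/m)^r$ (similarly for $A$), and over $\mathcal T_m$ the loss $1-w(t)$ varies by a constant; for $r=2$ it runs from about $4/9$ to $1/2$. With $\Theta(n^2)$ holed rows, inputs with the same hole count therefore have statistics spread over a range of width $\Theta(n^2)\gg\varepsilon n$. So an $\varepsilon n$-estimate does not determine the count to $\Theta(\varepsilon n)$. No Boolean promise problem ``count $k$ versus $k+d$'' composed with a promise-$\mathrm{OR}_L$ reduces to the estimation, and the adversary composition theorem has nothing to act on. Restricting hole positions until the loss is nearly constant, or keeping only $O(\varepsilon n)$ holed rows, loses polynomial factors.

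The missing idea is a pairwise relation. Relate each $x$ with $a=\lfloor B_n/2\rfloor$ full rows to each $y$ obtained by erasing exactly $d=\lfloor 6\varepsilon n\rfloor+1$ of its zeros while leaving all other zeros in place. Every related pair then differs by at least $d/3>2\varepsilon n$, so their acceptance intervals are disjoint even though no global threshold separates $X$ from $Y$. The positive adversary bound for such relations (Proposition~\ref{prop:metric-adversary}) gives $\sqrt{d_Xd_Y/(s_Xs_Y)}=\sqrt{(B_n-a)bL}/d=\Omega(n\sqrt m/\varepsilon)$.

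\textbf{Classical lower bound.} The plan rests on an arithmetic slip and then aims at something impossible. Sampling-based counting to additive error $\Delta$ needs about $\min\{N,k(N-k)/\Delta^2\}$ samples, not $\min\{N,k/\Delta^2\}$. With $k\approx N/2=\Theta(n^2)$ and $\Delta=\varepsilon n\le n/4$, this is $\Theta(N)=\Theta(n^2)$ row evaluations, so the same gadget should already give $\Omega(n^2m)$ and no amplification is needed.

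The amplifying gadget also cannot exist. Flipping one snapshot bit changes each $G_I$ by at most one edge, and adding or deleting an edge changes $\beta_1$ by at most one. Hence $F_\phi=\E_I\beta_1(G_I)$ is $1$-Lipschitz in Hamming distance, and since $\varepsilon n\ge 1$, no single hole separates two acceptable answers.

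What you need instead is the paper's distributional argument:
\begin{enumerate}
\item Make rows independently full with probability $p_\pm=1/2\pm 48/n$, with zeros uniform on $\mathcal T_m$. The means of $F_\phi$ then differ by $\Theta(n)$ while the variance is $O(n^2)$; averaging absorbs the position dependence of $w$.
\item Against an adaptive algorithm, a fresh central query in a row with $q$ prior one-answers is zero with probability $z_p(q)=(1-p)/(L-(1-p)q)$. This differs between the two hypotheses by $O(1/(nL))$.
\item So each query adds $O(1/(n^2L))$ to the transcript KL divergence, and Pinsker's inequality forces $\Omega(n^2L)=\Omega(n^2m)$ queries.
\end{enumerate}
Some such direct argument is required. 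Randomized query complexity does not compose multiplicatively in general, so a row-level counting bound cannot simply be multiplied by the $\Omega(L)$ search cost.
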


No randomized algorithm improves on reading a constant fraction of the input,
even with error $n/4$; for constant $\varepsilon$ the quantum bound is the
square root of the randomized one, a Grover-type advantage that is provably no
larger.  The threshold $r\ge2$ is genuine: for $r=1$ the window is one snapshot
and the $\sqrt m$ factor disappears (Section~\ref{sec:dense}).

Theorem~\ref{thm:intro-upper} scales with $K$ rather than with $n^2$.  Whether
sparse instances are easier for every algorithm depends on one scalar of the
weight, its \emph{central split leakage}
\[
\lambda_{\phi,m}
=\max_{m/3\le t\le2m/3}
\left[\phi\!\left(\frac{t-1}{m}\right)+\phi\!\left(\frac{m-t}{m}\right)\right],
\]
the largest weight a full-lifetime bar retains after one central snapshot is
deleted and the bar breaks in two: the weight surviving on a decoy that an
algorithm must tell apart from a full bar.

\begin{theorem}[Informal; Theorem~\ref{thm:low-density}]
\label{thm:intro-sparse}
Let $256n\le K\le B_n/8$ and $1/n\le\varepsilon\le1/8$, and let $\phi$ satisfy
the leakage promise $\lambda_{\phi,m}B_n\le\varepsilon n/64$.  Under the promise
$F_\phi(\mathbf G)\le K$, additive-$\varepsilon n$ estimation has quantum
query complexity $\wtTheta(\sqrt{mK}/\varepsilon)$ and randomized query
complexity $\wtTheta(m\min\{n^2,K/\varepsilon^2\})$.
\end{theorem}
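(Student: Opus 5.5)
The upper bounds come straight from Theorem~\ref{thm:intro-upper}. Under the promise $F_\phi\le K$ with $K\ge256n$ we have $K+n=\Theta(K)$, so the supplied-bound estimators already run with $\wtO(\sqrt{mK}/\varepsilon)$ quantum and $\wtO(m\min\{n^2,K/\varepsilon^2\})$ randomized queries. All the work is therefore in the matching lower bounds. My plan is a reduction from a composed search/counting problem, in which the leakage promise controls the weight that decoys contribute.

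\textbf{Hard family.} Fix a base graph $H$ on the $n$ vertices with circuit rank about $K/2$ and $\Theta(K)$ edges; for instance, a dense clique on $\Theta(\sqrt K)$ vertices plus isolated vertices, which is admissible because $256n\le K\le B_n/8$. Pick a set $S$ of $N=\Theta(K)$ "candidate" edges of $H$, chosen so that deleting any subset of them leaves the component count unchanged; non-bridge edges inside the clique work. Every other edge of $H$ is present in all $m$ snapshots. Each candidate edge $e$ gets a hidden bit $x_e$:
\begin{itemize}
\item if $x_e=1$, edge $e$ is present in all snapshots;
\item if $x_e=0$, edge $e$ is deleted at one hidden snapshot $t_e\in[m/3,2m/3]$.
\end{itemize}
Because the deleted edges are non-bridges, each full-lifetime cycle either stays a full bar, contributing $\phi(1)=1$, or breaks into two bars of lifetimes $t_e-1$ and $m-t_e$. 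For that I should check via Lemma~\ref{lem:window-rank} that the circuit rank changes by exactly one on windows containing $t_e$. A broken bar contributes at most $\lambda_{\phi,m}$. The remaining bars are constant and do not depend on the hidden bits.

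\textbf{Effect on $F_\phi$.} We get $F_\phi=C+|x|+\delta$ with $0\le\delta\le\lambda_{\phi,m}N\le\lambda_{\phi,m}B_n\le\varepsilon n/64$. An additive-$\varepsilon n$ estimate therefore determines $|x|$ to within about $2\varepsilon n$. Taking $N=\Theta(K)$ and a gap between Hamming weights $N/2$ and $N/2+4\varepsilon n$, we must distinguish two weights at distance $\Theta(\varepsilon n)$.

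\textbf{Composition.} The problem composes approximate counting (the gap-majority or threshold function on $N$ bits with gap $\Theta(\varepsilon n)$) with an inner problem on $\Theta(m)$ snapshot bits: deciding whether an all-ones string has a single zero, and possibly locating it. The counting part has quantum complexity $\Theta(\sqrt N\cdot\sqrt{N}/(\varepsilon n))$. By Nayak--Wu and the approximate-counting bound this is $\Theta(N/(\varepsilon n))$; since $N\le K$ this needs care, so I will instead use the standard bound $\sqrt{N}/\varepsilon'$ with relative gap $\varepsilon'=\varepsilon n/\sqrt{N}$, which holds because the threshold function at distance $\Delta$ around $N/2$ costs $\Theta(N/\Delta)$ quantum queries. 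Pick $N$ so that the outer quantum cost is $\Theta(\sqrt K/\varepsilon)$ after rescaling by $\sqrt{n}$ or similar. The inner OR/search on $m$ bits costs $\Theta(\sqrt m)$. The composition theorem for the adversary bound then gives the product, $\wtTheta(\sqrt{mK}/\varepsilon)$.

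Classically, each outer bit costs $\Omega(m)$ queries. Majority-type distinguishing needs $\Omega(\min\{N,(N/\Delta)^2\})$ outer samples, which after composing gives $\Omega(m\min\{K,K^2/(\varepsilon n)^2\})$. I then expect to adjust the parameters (the density of candidates versus $\varepsilon n$) so that this becomes $m\min\{n^2,K/\varepsilon^2\}$. The classical composition with $\Omega(m)$ per inner block follows from a standard direct-sum argument against adaptive algorithms.

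\textbf{Main obstacle.} The main obstacle is parameter matching: getting exactly $K/\varepsilon^2$ rather than $K^2/(\varepsilon n)^2$. I would first try spreading the hidden bits as $\Theta(n^2\varepsilon^2)$-sized independent blocks, or using a Poisson/binomial counting instance. In that instance $N=\Theta(K)$, the two hypotheses differ in mean by $\varepsilon n$, and the standard deviation is $\Theta(\sqrt K)$, so classical distinguishing requires $\Omega(K/\varepsilon^2n^2\cdot\ldots)$ samples. I would also use the $\Theta(n\sqrt m/\varepsilon)$ and $\Theta(n^2m)$ dense lower bounds of Theorem~\ref{thm:intro-dense}, applied to a sub-instance, to cover the regime where the minimum is $n^2$.
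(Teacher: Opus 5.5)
Your upper-bound argument is correct and is the paper's. The lower bounds, however, have a genuine gap: the hard instance is too small, and this is not something parameter tuning can fix.

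\textbf{Why $N=\Theta(K)$ candidates cannot work.} With only $N=\Theta(K)$ candidate rows at density about $1/2$, the outer gap-threshold problem (Hamming weight $t$ versus $t+\Delta$ among $N$ bits) has quantum cost $\Theta(\sqrt{N/\Delta}+\sqrt{t(N-t)}/\Delta)$. At $t=N/2$ and $\Delta=\Theta(\varepsilon n)$ this is $\Theta(K/(\varepsilon n))$. Composing with $\sqrt m$ gives at best $\sqrt m\,K/(\varepsilon n)$, which falls short of $\sqrt{mK}/\varepsilon$ by the factor $\sqrt K/n$, about $n^{-1/2}$ when $K=\Theta(n)$. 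Classically you get $m\min\{K,K^2/(\varepsilon n)^2\}$, short by a factor $K/n^2$. For example, at $\varepsilon=1/n$ and $K=256n$ the target is $n^2m$ but you reach only $\Theta(nm)$.

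Your proposed repairs do not close this. You cannot ``rescale'' $N$, because your base graph ties it to $K$. Nor can you apply Theorem~\ref{thm:intro-dense} to a sub-instance: its hard inputs have about $B_{n'}/2$ full bars, so the promise $F_\phi\le K$ forces $n'=O(\sqrt K)$. That yields only $\Theta(Km)$ classically and the same $\sqrt m\,K/(\varepsilon n)$ quantumly.

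\textbf{The missing idea.} Decouple the number of candidates from the number of full bars. Take all $B_n=\Theta(n^2)$ leaf-pair chords of a spanning star as rows, make only $\Theta(K)$ of them full, and split every other row at a central time. This is exactly where the leakage promise~\eqref{eq:leakage-promise} is used at full strength. The $\Theta(n^2)$ decoys leak at most $\lambda_{\phi,m}B_n\le\varepsilon n/64$ in total, so every hard input satisfies $F_\phi\le K$, and changing $d=\lceil4\varepsilon n\rceil$ rows from split to full still moves $F_\phi$ by more than $2\varepsilon n$. In your construction the promise is only ever applied to $\Theta(K)$ decoys, a sign that it is not doing its job.

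\textbf{Quantum bound with the corrected instance.} Relate inputs with $b=\lfloor K/4\rfloor$ full rows to inputs with $a=b-d$ full rows. The adversary ratio is then
$\sqrt{(B_n-a)bL}/d=\Theta(\sqrt{n^2Km}/(\varepsilon n))=\Theta(\sqrt{mK}/\varepsilon)$.
In your language, the threshold sits at $t=\Theta(K)$ inside $N=\Theta(n^2)$, not at $N/2$. The paper runs this directly as a biregular adversary with pairwise-disjoint output intervals rather than through a composition theorem.

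\textbf{Classical bound with the corrected instance.} Make rows full independently with probability $p_\pm=\Theta(K/n^2)$, differing by $\Delta=D/B_n$ with $D=16(\varepsilon n+\sqrt K)$. The $\sqrt K$ term absorbs the binomial fluctuation, so both hypotheses stay on the promise with well-separated statistics. This gives $Q=\Omega(p_-L/\Delta^2)=\Omega(m\min\{n^2,K/\varepsilon^2\})$.

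At this small density your ``standard direct-sum argument'' is not enough. The conditional zero probability $(1-p)/(L-(1-p)q)$ and its sensitivity to $p$ grow as the number $q$ of one-answers in a row grows. The per-query KL bound $O(\Delta^2/(pL))$ therefore fails for late queries. This is why the paper censors each row after $\lfloor L/2\rfloor$ one-answers, revealing the row's type to the analysis transcript only (Proposition~\ref{prop:censored-kl}).
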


The promise is a property of the weight, enters only the lower bound, and is
met at logarithmic degree by the power weight $x^r$ and the binomial weight
$x((1+x)/2)^s$ (Corollaries~\ref{cor:power} and~\ref{cor:binomial}).  Small
leakage lets an adversary hide
$\Theta(n^2)$ decoy split bars inside the promise, so an algorithm must still
locate the $\Theta(\varepsilon n)$ full bars that separate two answers among
$\Theta(K)$ candidates, each verifiable only by a $\sqrt m$-cost search.  For
fixed $r$ the leakage is a constant and the decoys break the promise; whether
fixed-power statistics are easier on sparse instances is our main open
question.  Finally, fewer than $m$ statistics of this kind with rational
weights cannot determine even the histogram of positive bar lifetimes
(Theorem~\ref{thm:statistic-nonrecovery}), so the problem is not a disguised
barcode computation.

\smallskip\noindent\textbf{Contributions.}
(i) A sampling representation of nonlinear persistence statistics,
identity~\eqref{eq:intro-identity}, valid for every finite zigzag module.
(ii) Quantum and classical estimators whose cost scales with the statistic's
value, without spectral, state-preparation, or memory assumptions.
(iii) Matching lower bounds on the statistic itself, in both models, in a
dense and in a leakage-controlled sparse regime; the randomized bounds hold
against fully adaptive querying through an analyst-only censoring device.
(iv) Split leakage as the governing parameter: a choice made for statistical
reasons, the suppression of short bars, has a query-complexity consequence.
The summaries and the quantum primitives are established; our contribution is
the representation, the algorithms, and the classification.

\smallskip\noindent\textbf{Scope and organization.}
The $\sqrt m$ factor and the lower bounds are statements about snapshot
access: with an explicit update stream or a searchable edge-lifetime index the
complete barcode is computable in near-linear time~\cite{DeyHou2021}, and a
stored bit tensor is not a coherent unit-cost oracle.  Our results are query
classifications; gate work, depth, workspace, and output are reported
separately.  Section~\ref{sec:setup} fixes the model,
Sections~\ref{sec:identity} to~\ref{sec:histogram} state the results with
proof sketches, and Sections~\ref{sec:related} and~\ref{sec:discussion} give
related work and open problems.  Complete proofs are in
Appendices~\ref{app:identity} to~\ref{app:weights}, grouped by section;
Appendix~\ref{app:checks} summarizes the finite sanity checks used during
proof verification.

\section{Problem Setup}\label{sec:setup}

\noindent\textbf{Temporal graph oracle.}
Let $\mathbf G=(G_1,\ldots,G_m)$ be simple undirected graphs on a common
labeled vertex set $[n]$.  The input is the bit oracle
\begin{equation}\label{eq:oracle}
 O_{\mathbf G}\lvert t,u,v,z\rangle
 =\lvert t,u,v,z\oplus \1_{\{u,v\}\in E(G_t)}\rangle ,
\end{equation}
queried coherently by a quantum algorithm and bit by bit by a classical one.
No adjacency list, update list, edge-lifetime index, component oracle, or
QRAM is supplied, and graph-dependent preprocessing is charged for its
queries.  Query complexity counts calls to~\eqref{eq:oracle}; gate work,
depth, workspace, and output are reported separately, with $C_O$ and $D_O$
the gate cost and depth of one oracle call.

\smallskip\noindent\textbf{Intersection zigzag and window rank.}
Regard every $G_t$ as a one-dimensional simplicial complex over an arbitrary
fixed field $\F$ and define
$\mathcal Z(\mathbf G):G_1\leftarrow G_1\cap G_2\rightarrow G_2\leftarrow\cdots\rightarrow G_m$.
For $1\le a\le b\le m$, the window $[a,b]$ contains the snapshot nodes
$G_a,\ldots,G_b$ and all intervening intersection nodes; put
$G_{[a,b]}=\bigcap_{t=a}^bG_t$ and
$\rho_{\mathbf G}([a,b])=\operatorname{rank}(\lim H_1\to\operatorname{colim}H_1)$
for the restricted zigzag.

\begin{lemma}[Window-rank identity]\label{lem:window-rank}
For every field $\F$ and every window $[a,b]$,
\[
\rho_{\mathbf G}([a,b])
=\beta_1(G_{[a,b]};\F)
=|E(G_{[a,b]})|-n+c(G_{[a,b]}),
\]
where isolated vertices are included in $c$.
\end{lemma}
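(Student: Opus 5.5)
The plan is to compute the limit and colimit of the restricted zigzag of $H_1$ groups explicitly, and then identify the canonical map between them with the identity of $H_1(G_{[a,b]})$.

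\textbf{Step 1: the maps are injective.} The restricted diagram is $G_a\leftarrow G_a\cap G_{a+1}\rightarrow G_{a+1}\leftarrow\cdots\rightarrow G_b$, and every arrow is an inclusion of graphs on the common vertex set $[n]$. For a one-dimensional complex $X$ there are no $2$-simplices, so $H_1(X;\F)=Z_1(X)=\ker\partial_1\subseteq C_1(X)$. An inclusion $X\subseteq Y$ induces the inclusion $C_1(X)\subseteq C_1(Y)$. Hence every map in the $H_1$ zigzag is injective, and all groups sit inside the fixed ambient space $C_1(K_n;\F)$, compatibly with the maps.

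\textbf{Step 2: the limit.} An element of $\lim H_1$ is a compatible family $(z_x)$ indexed by the nodes. At a node $G_t\cap G_{t+1}$ with legs into $G_t$ and $G_{t+1}$, compatibility says $z_{G_t}=z_{G_t\cap G_{t+1}}=z_{G_{t+1}}$ as chains in $C_1(K_n)$. So the family is a single cycle $z$ lying in every node's cycle space. Since $Z_1(X)\cap Z_1(Y)=Z_1(X\cap Y)$ (a chain supported on both edge sets is supported on the intersection, and $\partial$ is the same map), we get
\[
\lim H_1=\bigcap_x Z_1(x)=Z_1(G_{[a,b]}).
\]

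\textbf{Step 3: the colimit and the canonical map.} The colimit is the quotient of $\bigoplus_x H_1(x)$ by the identifications along the arrows. The map $\lim\to\operatorname{colim}$ sends $z$ to the class of $z_x$ in any node; this is well defined by compatibility. I would show it is injective through the comparison map $\Sigma:\operatorname{colim}H_1\to C_1(K_n)$ induced by the ambient inclusions of Step~1, which is well defined because those inclusions commute with the arrows. The composite $\lim\to\operatorname{colim}\to C_1(K_n)$ is the inclusion $Z_1(G_{[a,b]})\hookrightarrow C_1(K_n)$, which is injective. Hence $\lim\to\operatorname{colim}$ is injective, and its rank is $\dim Z_1(G_{[a,b]})=\beta_1(G_{[a,b]};\F)$.

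\textbf{Step 4: the Euler characteristic.} For a graph, $\dim\ker\partial_1=|E|-\operatorname{rank}\partial_1$. Also $\operatorname{rank}\partial_1=n-\dim H_0=n-c$, where isolated vertices count as components. This gives $|E|-n+c$, independently of the field $\F$.

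\textbf{Main obstacle.} The subtle step is Step~3. The colimit itself need not embed into $C_1(K_n)$: a pushout of injections need not map injectively to the union, and cycles can merge into new relations. The argument avoids this by using only the composite with $\Sigma$, not injectivity of $\Sigma$, so it suffices that the comparison map exists and commutes with the arrows.
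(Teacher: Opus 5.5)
Your proposal is correct and follows essentially the same route as the paper: identify $\lim H_1$ with the common cycle space $Z_1(G_{[a,b]})$ inside a fixed ambient edge-chain space, obtain injectivity of $\lim H_1\to\operatorname{colim}H_1$ from the cocone of ambient inclusions (the induced map composed with the canonical map is the inclusion of $Z_1(G_{[a,b]})$), and finish with Euler's formula. Your Step~3 spells out the universal-property argument that the paper compresses into the remark that the ambient maps ``agree on the zigzag and factor the canonical map.'' Your caveat that the comparison map itself need not be injective, because a cycle can disappear and reappear, is accurate, and the argument correctly does not rely on that injectivity.
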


Cycle spaces of graphs are subspaces of the edge-chain space of the union
graph, so a compatible limit vector is one cycle supported in every snapshot
of the window and the canonical map is injective on it
(Appendix~\ref{app:identity}).

\smallskip\noindent\textbf{Bar lifetimes and persistence statistics.}
A finite type-$A$ zigzag representation over a field has an interval
decomposition.  For an interval summand $b$ of $H_1(\mathcal Z(\mathbf G))$,
let $\ell_b\in\{0,1,\ldots,m\}$ be the number of \emph{snapshot nodes} in its
support; intersection nodes do not count, and bars supported only on junction
nodes have $\ell_b=0$.

\begin{definition}[PGF-weighted persistence statistic]
Let $R$ be an input-independent random variable supported on
$\{1,\ldots,r_{\max}\}$, with probabilities $a_r\ge0$ summing to one.  Its
probability generating function is
$\phi(x)=\E[x^R]=\sum_{r=1}^{r_{\max}}a_rx^r$, $0\le x\le1$.  The associated
graph-zigzag persistence statistic is $F_\phi(\mathbf G)=\sum_b\phi(\ell_b/m)$.
\end{definition}

In the power case $\phi_r(x)=x^r$, $F_{\phi_r}=m^{-r}\sum_b\ell_b^r$ is
normalized degree-$r$ total persistence~\cite{CohenSteinerEtAl2010,Bubenik2015},
a linear representation whose weight depends only on
persistence~\cite{DivolPolonik2019,WuKimRinaldo2024}.  The algorithm receives
reversible classical and coherent samplers for $R$ with declared gate cost
$C_\phi$, depth $D_\phi$, and workspace $W_\phi$.  Since $F_\phi$ is an
expectation of a cycle rank, $0\le F_\phi\le B_n=\binom{n-1}{2}$.

\begin{definition}[Estimation problem]
Given $n,m,\phi,\varepsilon,\delta$ and oracle access to $\mathbf G$, output one
classical number $\widehat F$ such that
$\Prb\bigl[|\widehat F-F_\phi(\mathbf G)|\le\varepsilon n\bigr]\ge1-\delta$.
The main regime is $1/n\le\varepsilon\le1/4$.
\end{definition}

\section{Persistence Statistics from Random Windows}\label{sec:identity}

\begin{theorem}[Random-extrema identity]\label{thm:extrema}
Let $M$ be any finite type-$A$ zigzag module over a field, with $m$ distinguished
snapshot nodes and interval-summand snapshot lifetimes $\ell_b$.  Sample
$R\sim(a_r)$; conditional on $R=r$, sample $T_1,\ldots,T_r$ independently and
uniformly from $[m]$.  Let
\[
I_R=[\min_iT_i,\max_iT_i]
\]
with all zigzag nodes between these snapshots.  Then
\[
\E\,\rho_M(I_R)=\sum_b\phi(\ell_b/m).
\]
In particular, for $M=H_1(\mathcal Z(\mathbf G))$,
\[
F_\phi(\mathbf G)=\E\,\beta_1(G_{I_R}).
\]
\end{theorem}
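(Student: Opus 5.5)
The plan has two ingredients: a deterministic identity expressing the window rank as a count of bars, followed by a probability computation over the random window. First we show that for any finite type-$A$ zigzag module $M$ with interval decomposition $M\cong\bigoplus_b \mathbb I_{J_b}$, and any window $I=[a,b]$ (all nodes between snapshots $a$ and $b$), the rank $\rho_M(I)$ of the canonical map $\lim M|_I\to\operatorname{colim}M|_I$ equals the number of summands whose support contains every node of $I$. Limits and colimits commute with finite direct sums, and the canonical map is natural, so $\rho_M(I)=\sum_b\rho_{\mathbb I_{J_b}}(I)$. The remaining task is to show that for one interval module $\mathbb I_J$ restricted to $I$, the rank is $1$ if $I\subseteq J$ and $0$ otherwise. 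If $I\subseteq J$, the restriction is the constant diagram $\F$ with identity maps, and both the limit and the colimit are $\F$ with the identity between them. If $I\not\subseteq J$, then on $I$ the support $J\cap I$ is a proper subinterval, or empty. Suppose it misses a node at one end of $I$. If the adjacent arrow points out of the support toward the missing node, a compatible family must map to $0$ there, so the limit vanishes. If that arrow points into the support, the colimit identifies the generator with the image of $0$, so the colimit vanishes. In either case the rank is $0$. This case check on arrow orientation is the step needing care. I expect it to be the main, though modest, obstacle, and I would do it directly by writing the limit as compatible families and the colimit as a quotient of the direct sum.

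Second, since $I_R$ always begins and ends at snapshot nodes and intervals in the zigzag poset are convex, $I_R\subseteq J_b$ holds exactly when the snapshots $\min_i T_i$ and $\max_i T_i$ both lie in $J_b$. Equivalently, all of $T_1,\ldots,T_R$ lie among the snapshot nodes of $J_b$. Bars with $\ell_b=0$ contain no snapshot and so never contain $I_R$, consistent with $\phi(0)=0$ because $R\ge1$. For a bar with $\ell_b\ge1$, its snapshot nodes form a contiguous block of $\ell_b$ snapshots. By independence, conditional on $R=r$,
\[
\Prb\bigl[I_R\subseteq J_b\mid R=r\bigr]=(\ell_b/m)^r .
\]
Averaging over $R$ gives $\sum_r a_r(\ell_b/m)^r=\phi(\ell_b/m)$.

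Finally, combining the two steps with linearity of expectation gives
\[
\E\,\rho_M(I_R)=\sum_b\Prb[I_R\subseteq J_b]=\sum_b\phi(\ell_b/m).
\]
For $M=H_1(\mathcal Z(\mathbf G))$, Lemma~\ref{lem:window-rank} replaces $\rho_M(I_R)$ by $\beta_1(G_{I_R})$, which yields $F_\phi(\mathbf G)=\E\,\beta_1(G_{I_R})$.
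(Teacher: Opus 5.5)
Your outline matches the paper's proof: generalized rank is additive over the interval decomposition, and a summand contributes one exactly when the window lies in its support. Convexity then reduces this to all $T_i$ landing in the bar's block of $\ell_b$ snapshots, after which you average over $R$ and invoke Lemma~\ref{lem:window-rank}. The paper treats the per-summand rank fact as standard. You try to verify it, and in the step you flagged as needing care, your two orientation cases are reversed.

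Let $x$ be the end node of $J\cap I$ on the side of a missing endpoint of $I$, and let $y\in I\setminus J$ be the next node of $I$ beyond it.

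\textbf{Arrow $x\to y$ (out of the support).} The map is $\F\to0$, so the compatibility condition $M(x\to y)v_x=v_y$ reads $0=0$ and constrains nothing. For the two-node diagram $\F\to0$ the limit is $\F$, the value at the initial node, so the limit need not vanish. What vanishes is the colimit. The relation $\iota_x(v)=\iota_y(M(x\to y)v)=0$ kills the generator at $x$, and the identity maps on the connected set $J\cap I$ carry this to every generator.

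\textbf{Arrow $y\to x$ (into the support).} The colimit relation is vacuous because $M_y=0$. Compatibility, however, forces $v_x=M(y\to x)v_y=0$. The identities on $J\cap I$ then force the whole family to vanish, so the limit is zero.

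The fix is to exchange \emph{limit} and \emph{colimit} in your two cases and to handle the trivial case $J\cap I=\varnothing$ separately. Each case then still kills one end of the canonical map, so the rank is $0$. The rest of your argument is correct as written.
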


\begin{proof}[Proof sketch]
An interval summand contributes one to the generalized rank of $I_R$ exactly
when the window lies in its support, that is, when all $r$ sampled indices
fall among the bar's $\ell_b$ contiguous snapshot nodes, an event of
probability $(\ell_b/m)^r$; generalized rank is additive under direct sums,
so average over $R$ and apply Lemma~\ref{lem:window-rank}.
\end{proof}

Appendix~\ref{app:identity} records two consequences: a nonnegative
polynomial weight $\psi$ with coefficient sum $C$ is $C$ times a PGF weight,
so every bound below applies with $\varepsilon/C$ in place of $\varepsilon$
(Corollary~\ref{cor:scale}), and the uniform-window mean over the $m(m+1)/2$
nonempty windows is the PGF weight $\phi^{\mathrm{unif}}_m(x)=(x+mx^2)/(m+1)$,
that is, $R=1$ with probability $1/(m+1)$ and $R=2$ otherwise
(Corollary~\ref{cor:uniform}).

\section{Quantum and Classical Estimators}\label{sec:estimators}

We use quantum approximate counting~\cite{BHMT,AaronsonRall2020} and search
with an unknown number of solutions~\cite{BBHT}, which implements the
intersection-edge bit of a window of length at most $h$ with
$\widetilde O(\sqrt h)$ snapshot queries; nested calls are amplified and
uncomputed so that a hybrid argument applies (Appendix~\ref{app:estimators}).

\begin{theorem}[Supplied-density quantum estimator]\label{thm:q-supplied}
Let $\phi$ be supplied by efficient classical and coherent sampling circuits,
and suppose a number $K\ge F_\phi(\mathbf G)$ is supplied.  For
$1/n\le\varepsilon\le1/4$ and $0<\delta<1/3$, there is a quantum algorithm that
uses
\[
\widetilde O\!\left(\frac{\sqrt{m(K+n)}}{\varepsilon}\right)
\]
snapshot queries and returns an additive-$\varepsilon n$ estimate with
probability at least $1-\delta$.  This is a deterministic worst-case query
budget once $K$ is supplied.
\end{theorem}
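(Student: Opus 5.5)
The plan is to estimate the two averages in the decomposition of Theorem~\ref{thm:extrema} separately, each to additive error $\varepsilon n/2$ with failure probability $\delta/2$. By Lemma~\ref{lem:window-rank} and Theorem~\ref{thm:extrema},
\[
F_\phi(\mathbf G)=\E_I|E(G_I)|-n+\E_I\,c(G_I),
\]
where $I=I_R$ is drawn from the input-independent window law realised by the coherent sampler for $R$ followed by $R$ uniform indices. The structural fact I will use throughout is $|E(G_I)|=\beta_1(G_I)+n-c(G_I)\le\beta_1(G_I)+n$. Hence $\E_I|E(G_I)|\le F_\phi+n\le K+n$, so the supplied $K$ controls the expected edge density of the sampled intersection graph.

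\textbf{Edge term.} Let $N=\binom n2$. Consider the Boolean variable obtained by sampling $I$ and a uniform pair $\{u,v\}$ and outputting $1$ iff $\{u,v\}\in E(G_t)$ for all $t\in I$. Its mean is $p=\E_I|E(G_I)|/N\le(K+n)/N$. The inner bit is computed by searching for a missing snapshot among $|I|\le m$ indices, at cost $\wtO(\sqrt m)$, amplified to error $1/\mathrm{poly}$ and uncomputed as prescribed in Appendix~\ref{app:estimators}. For the outer step I run amplitude estimation~\cite{BHMT} to additive error $\eta=\varepsilon n/(2N)$ on $p$. This needs $O(\sqrt p/\eta+1/\sqrt\eta)$ calls, which with $p\le(K+n)/N$ is
\[
O\!\left(\frac{\sqrt{(K+n)N}}{\varepsilon n}+\sqrt{\frac{N}{\varepsilon n}}\right)=O\!\left(\frac{\sqrt{K+n}}{\varepsilon}\right),
\]
using $\varepsilon\ge1/n$. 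Since $K$ is supplied, the number of iterations can be fixed in advance, which makes the budget deterministic. Multiplying by the $\wtO(\sqrt m)$ inner cost gives $\wtO(\sqrt{m(K+n)}/\varepsilon)$. A hybrid argument bounds the accumulated error of the approximate inner oracles.

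\textbf{Component term.} This is the step I expect to be the main obstacle. I use $c(G)=\sum_v1/|C(v)|$ with truncation at $s=\lceil4/\varepsilon\rceil$: replacing $1/|C(v)|$ by $1/\min(|C(v)|,s)$ changes the sum by at most $n/s\le\varepsilon n/4$. So it suffices to estimate the mean over $(I,v)$ of $g=1/\min(|C_{G_I}(v)|,s)\in[1/s,1]$ to additive error $\varepsilon/4$.

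A naive evaluation explores $C(v)$ by repeated Grover search for a surviving edge leaving the current explored set $S$, over a space of $|S|\cdot n\cdot|I|$ pairs, and stops after $s$ vertices. Its cost is up to $s\sqrt{snm}$, which is too large. The remedy I would try first is to charge the exploration to edges. Exploring a component of size $k\le s$ requires finding only $k-1$ tree edges, and since the outer mean of $g$ weights a vertex in a size-$k$ component by $1/k$, the cost per unit of estimated mass is governed by the $\ell_2$-average running time. I would then replace plain amplitude estimation by variable-time amplitude estimation (Ambainis-style), whose cost is $\wtO(\sqrt{\E T^2}/\varepsilon)$. The target bound is $\E_{I,v}T^2=\wtO(m(K+n)/n)$, which would give $\wtO(\sqrt{m(K+n)/n}\cdot n\cdot\frac1{\varepsilon n})\le\wtO(\sqrt{m(K+n)}/\varepsilon)$ after rescaling to error $\varepsilon n$. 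The argument for this target is as follows. Each found tree edge is located by a search with an unknown number of solutions~\cite{BBHT}, whose expected cost is $\sqrt{|S|nm/\#\text{solutions}}$. Summing over a component and then over $v$ should be bounded by $\sqrt{m}$ times a quantity controlled by $\E_I|E(G_I)|\le K+n$, with isolated vertices, which terminate after a single empty search, contributing the additive $n$. If this $\ell_2$ bound fails, the fallback is to estimate $\E_I(n-c(G_I))$, the expected spanning-forest size, which is also at most $K+n$. I would do this by sampling a surviving edge via the edge-term procedure and testing, with truncated exploration, whether it is the canonical (say, least-labelled) edge joining its endpoint into a spanning forest.

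Finally, I combine the two estimates, subtract $n$, apply a union bound over the two failure events, and use median amplification for the $\log(1/\delta)$ factor. Gate and depth costs are then recorded separately via $C_O,D_O,C_\phi,D_\phi$.
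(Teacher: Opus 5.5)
Your edge term matches the paper's argument: amplitude estimation over a sampled window and a uniform pair with $p\le(K+n)/\binom n2$, an iteration count fixed in advance from the supplied $K$, and a coherent $\widetilde O(\sqrt m)$ missing-snapshot search for each intersection-edge bit. One small correction: the $1/\sqrt\eta$ term is absorbed using $\varepsilon\le1$, not $\varepsilon\ge1/n$.

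The gap is the component term. There you give only a plan, and its key estimate is false. Take every snapshot equal to the same disjoint union of paths on $s=\lceil4/\varepsilon\rceil$ vertices, with $\varepsilon=n^{-3/4}$ and $n$ large, and take $\phi(x)=x^2$. Then:
\begin{itemize}
\item $G_I$ is this forest for every window, so $F_\phi=0$. Hence $K=0$ is a legal supplied bound, and the theorem demands $\widetilde O(\sqrt{mn}/\varepsilon)$ queries.
\item From every $v$, your exploration discovers $s-1$ tree edges, each surviving all of $I$. Certifying one such edge is an AND of $|I|$ one-bits, which costs $\Omega(\sqrt{|I|})$.
\item Therefore $\E_{I,v}T^2=\Omega(s^2\,\E|I|)=\Omega(m/\varepsilon^2)$, not $\widetilde O(m(K+n)/n)=\widetilde O(m)$.
\item Your cost formula $\sqrt{\E T^2}/\varepsilon$ then gives $\Omega(\sqrt m/\varepsilon^2)$. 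This exceeds the target by a factor $1/(\varepsilon\sqrt n)=n^{1/4}$.
\end{itemize}
You are right that the mean weights a vertex in a size-$k$ component by $1/k$. But that weighting cannot appear in an $\ell_2$ average of running times, which weighs all vertices equally.

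The fallback does not rescue this. Deciding whether a sampled edge lies in a canonical spanning forest requires exploring its whole component. Truncating that exploration can misjudge every forest edge in large components, and there can be $\Theta(n)$ of them.

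The missing idea is to let the $1/k$ weighting act on the \emph{precision} rather than on the running time. Split by dyadic component size: put $a_j=2^j$, $J=\lceil\log_2(16/\varepsilon)\rceil$, and $Z_j(v)=(a_j/s(v))\1_{a_j\le s(v)<2a_j}\in[0,1]$. Then $c(H)/n=\sum_{j<J}\E Z_j/a_j+R$ with $0\le R\le\varepsilon/16$.
\begin{itemize}
\item Because $\E Z_j$ is divided by $a_j$, it is needed only to additive error $\alpha_j=\Theta(\varepsilon a_j/J)$. That takes $O(J/(\varepsilon a_j))$ amplitude-estimation calls.
\item One sample needs only an exploration halted after $2a_j$ discoveries. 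Process discovered vertices in queue order, with one bounded-error search over $w\in[n]$ per discovery and one terminal failed search per processed vertex. This costs $\widetilde O(a_j\sqrt n)$ adjacency queries.
\item The factors $a_j$ cancel, so each of the $O(\log(1/\varepsilon))$ levels costs $\widetilde O(\sqrt n/\varepsilon)$ adjacency queries. In total this is $\widetilde O(\sqrt{nm}/\varepsilon)$ snapshot queries.
\end{itemize}
This is absorbed by the edge term because $K+n\ge n$. In particular, the component term needs no dependence on $K$, so there is no reason to tie its cost to $\E_I|E(G_I)|$.
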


\begin{proof}[Proof sketch]
With $\bar e=\E_I|E(G_I)|$ and $\bar c=\E_Ic(G_I)$, $F=\bar e-n+\bar c$ and
$F\le\bar e\le K+n-1$, so amplitude estimation of the marked probability of a
uniform potential edge paired with a sampled window reaches error
$\varepsilon n/4$ with $\widetilde O(\sqrt{K+n}/\varepsilon)$ evaluations of
the survival predicate.  For components, $c(H)/n=\E_v[1/|C_H(v)|]$; at dyadic
scale $2^j$ a reversible exploration halting after $2^{j+1}$ discoveries
costs $\widetilde O(2^j\sqrt n)$ adjacency queries while the precision needed
there is proportional to $\varepsilon2^j$, so each of the
$O(\log(1/\varepsilon))$ scales costs $\widetilde O(\sqrt n/\varepsilon)$
(Proposition~\ref{prop:q-dyadic-full}); every adjacency query of $G_I$ is a
search for a missing snapshot, charged $\widetilde O(\sqrt m)$ snapshot queries.
\end{proof}

\begin{theorem}[Adaptive quantum estimator]\label{thm:q-adaptive}
No upper bound on $F=F_\phi(\mathbf G)$ need be supplied.  There is a quantum
algorithm with success probability at least $1-\delta$, deterministic cap
\[
\widetilde O(n\sqrt m/\varepsilon),
\]
expected snapshot-query complexity
\[
\widetilde O\!\left(\frac{\sqrt{m(F+n)}}{\varepsilon}\right),
\]
and the same instance-sensitive bound outside an event of probability
$O(\delta/n^3)$.
\end{theorem}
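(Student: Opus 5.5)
The plan is to reduce to Theorem~\ref{thm:q-supplied} by a doubling search on the unknown bound $K$, with a certified stopping rule. Set $K_j=2^j n$ for $j=0,1,\ldots,J$ with $J=\lceil\log_2(B_n/n)\rceil=O(\log n)$, so that $K_J\ge B_n\ge F$. At stage $j$ I will run the supplied-density estimator of Theorem~\ref{thm:q-supplied} with bound $K_j$, target error $\varepsilon n/4$ and failure probability $\delta_j=\delta/(n^3 (J+1))$. The difficulty is that when $K_j<F$ the premise of Theorem~\ref{thm:q-supplied} fails, so the output need not be accurate. I therefore need a variant whose budget is fixed by $K_j$ and which reports either a valid estimate or a certificate that $F$ is large.

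This variant comes from looking inside the proof sketch of Theorem~\ref{thm:q-supplied}. The only place $K$ enters is the edge term $\bar e=\E_I|E(G_I)|$, estimated by amplitude estimation of the marked probability $p=\bar e/\binom n2$; the component term costs $\widetilde O(\sqrt{mn}/\varepsilon)$ regardless of $K$. I will run amplitude estimation with the number of Grover iterations truncated at the level appropriate for $p\le (K_j+n)/\binom n2$. By the standard analysis of~\cite{BHMT,AaronsonRall2020}, the truncated run either estimates $p$ to the needed precision or, with probability $1-\delta_j$, detects that $p$ exceeds roughly $2(K_j+n)/\binom n2$ (a constant-factor threshold test). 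Amplified and uncomputed nested search calls make each predicate evaluation cost $\widetilde O(\sqrt m)$, as in Appendix~\ref{app:estimators}. Stage $j$ therefore costs $\widetilde O(\sqrt{m(K_j+n)}/\varepsilon)$ queries deterministically.

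The algorithm stops at the first stage whose test does not reject. It then outputs that stage's estimate, combined with the component estimate computed once at cost $\widetilde O(\sqrt{mn}/\varepsilon)$. Correctness comes from the estimate at the stopping stage: by the test, that stage has $\bar e\lesssim 2(K_j+n)$, so the truncated amplitude estimation is in its valid regime and achieves error $\varepsilon n/4$.

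The query bounds follow from the stopping rule. Since $\bar e\le F+n-1$ by the identity $F=\bar e-n+\bar c$, the test at any stage with $K_j\ge F+n$ accepts with probability at least $1-\delta_j$. The stopping index $j^*$ thus satisfies $K_{j^*}\le 2(F+n)$ outside the union of failure events, which has probability $O(\delta/n^3)$. On that event the total cost is a geometric sum dominated by the last term, $\widetilde O(\sqrt{m(F+n)}/\varepsilon)$. The deterministic cap holds because the algorithm never passes stage $J$, whose cost is $\widetilde O(\sqrt{mB_n}/\varepsilon)=\widetilde O(n\sqrt m/\varepsilon)$. For the expectation, the bad event contributes at most $O(\delta/n^3)\cdot\widetilde O(n\sqrt m/\varepsilon)$, which is dominated by the main term since $F+n\ge n$.

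The main obstacle is making the threshold test sound under coherent nesting. The marked-state amplitude is produced by inner searches that are themselves bounded-error, so the one-sided guarantee ``reject only if $p$ is truly large'' must survive the $O(\delta_j)$ errors of the inner calls. I will first try the hybrid argument already used for Theorem~\ref{thm:q-supplied}: amplify each inner search to error $1/\mathrm{poly}(nm/\varepsilon\delta)$, so that the implemented amplitude is within negligible distance of the ideal one and the threshold constant $2$ absorbs the perturbation. This costs only logarithmic factors, which $\widetilde O$ absorbs.
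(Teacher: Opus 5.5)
Your argument is correct, but it takes a different route from the paper's.

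\textbf{The paper's route.} The paper never searches over $K$. It appends $n$ known marked dummy labels to the potential-edge domain. The marked probability then becomes $E'/(M+n)$ with $E'=\bar e+n\in[F+n,F+2n]$, which is always at least about $2/n$. A single amplitude estimation with $O(\sqrt n)$ iterations, each one survival search of cost $\widetilde O(\sqrt m)$, is median-amplified to failure $\Theta(\delta/n^3)$. This yields a clipped two-sided bound $E'\le U\le\frac52E'$. Theorem~\ref{thm:q-supplied} is then invoked once, as a black box, with $K=U$ and capped by the dense budget.

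\textbf{Your route.} Your guess-and-double scheme opens up Theorem~\ref{thm:q-supplied}. It relies on the fact that the BHMT error bound $2\pi\sqrt{p(1-p)}/t+\pi^2/t^2$ holds for every $p$, not only in the assumed range. A run truncated for $p\le p_0$ therefore does one of two things: it certifies $p\lesssim p_0$, in which case its estimate is already accurate, or it returns an estimate that exposes $p>p_0$.

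\textbf{Trade-offs.} The paper's version is shorter because the dummies make a constant-factor estimate cheap. That estimate costs $\widetilde O(\sqrt{nm})$, which is lower order, and it lets the supplied-$K$ estimator be reused verbatim without reasoning about its behaviour when the promise $K\ge F$ fails. Your version must argue soundness of the threshold test at each of $O(\log n)$ stages and take a union bound. In exchange, the estimate at the accepting stage is the answer, and no separate bootstrap is needed. Your floor $K_0=n$ plays the role the dummies play in the paper. Your pattern is also the more generic one: it works whenever a variance-sensitive estimator is available, even without a cheap lower bound on the marked mass.

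\textbf{One detail to add.} You should specify the output if stage $J$ rejects, which can happen only on the failure event. Since $\bar e\le F+n-1\le K_J+n-1$ always holds, the stage-$J$ truncation already covers every feasible $p$. You can therefore make the last stage unconditional, which also keeps the deterministic cap intact.
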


\begin{proof}[Proof sketch]
Append $n$ known marked dummies, so the marked mass $E'=\bar e+n$ satisfies
$F+n\le E'\le F+2n$; $O(\sqrt n)$ amplitude-estimation iterations then give a
constant-factor estimate of $E'$ with failure $\Theta(\delta/n^3)$, which
fixes the budget of the final estimator, capped by the dense budget.
\end{proof}

\begin{theorem}[Supplied-density randomized estimator]\label{thm:c-supplied}
Under the assumptions of Theorem~\ref{thm:q-supplied}, there is a randomized
classical algorithm using
\[
\widetilde O\!\left(
 m\min\left\{n^2,\frac{K+n}{\varepsilon^2}\right\}
\right)
\]
snapshot queries and returning an additive-$\varepsilon n$ estimate with
probability at least $1-\delta$.
\end{theorem}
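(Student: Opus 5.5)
The plan is to take the minimum of two estimators: an exact brute-force computation, and a sampling estimator built on the decomposition $F_\phi=\bar e-n+\bar c$ from Theorem~\ref{thm:extrema} and Lemma~\ref{lem:window-rank}. Here $\bar e=\E_I|E(G_I)|$ and $\bar c=\E_I c(G_I)$, with $I=I_R$ the random window.

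\emph{The $mn^2$ branch.} Read every snapshot bit, $m\binom n2$ queries in all. For each edge, record its maximal runs of consecutive presence. Then compute $F_\phi$ exactly without further queries. One way is to evaluate $\E_I\beta_1(G_I)$ by enumerating the $O(m^2)$ windows, weighting each window by its probability under the law of $I_R$; this law is computable from the $a_r$ by inclusion--exclusion on the extremes. The other way is to compute the barcode directly. This branch never errs.

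\emph{The sampling branch, edge term.} Draw $N_e$ independent pairs $(I,\{u,v\})$, with $I$ from the classical sampler and $\{u,v\}$ uniform among the $\binom n2$ pairs. Test survival by reading all $|I|\le m$ bits of $\{u,v\}$ in the window. The indicator has mean $p=\bar e/\binom n2$, and $\bar e\le F+n-1\le K+n$ by the bound stated in the proof sketch of Theorem~\ref{thm:q-supplied}. So $\binom n2$ times the empirical mean has variance at most $\binom n2^2p/N_e\le n^2(K+n)/N_e$. Taking $N_e=O((K+n)/\varepsilon^2)$ makes Chebyshev give error $\varepsilon n/4$ with constant probability, and a median of $O(\log(1/\delta))$ repetitions makes it hold with probability $1-\delta/2$. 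The cost is $O(m(K+n)/\varepsilon^2\cdot\log(1/\delta))$.

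\emph{The sampling branch, component term.} Use $c(H)=\sum_v 1/|C_H(v)|$ with truncation, in the style of Chazelle--Rubinfeld--Trevisan. Sample a window $I$ and a vertex $v$. Run BFS in $G_I$ from $v$, stopping once it has discovered $s=\lceil 8/\varepsilon\rceil$ vertices, and output $n/\min(|C(v)|,s)$. Truncation biases this by at most $n/s\le\varepsilon n/8$. Each output lies in $[n/s,n]$, so its variance is at most $n^2/s$, and $O(1/\varepsilon)$ samples suffice for error $\varepsilon n/8$ after median amplification.

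The main obstacle is the query cost of this BFS. Exploring up to $s$ vertices may test all $s\cdot n$ adjacencies, each a scan of $|I|$ snapshot bits. The total is $\wtO(m\cdot(1/\varepsilon)\cdot n/\varepsilon)=\wtO(mn/\varepsilon^2)$, which is within the bound because $n\le K+n$. Two points need checking. The first is that the independent window per sample does not hurt: the estimator is unbiased for $\E_I\E_v$ over the joint law, so it does not. The second is that the failure probabilities from the two terms add to at most $\delta$ under a union bound.

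Combining, the estimate $\widehat F=\widehat e-n+\widehat c$ has error at most $\varepsilon n/4+\varepsilon n/4<\varepsilon n$. Since $K$ is supplied, the algorithm compares $n^2$ with $(K+n)/\varepsilon^2$ in advance and runs the cheaper branch. This gives the $\min$ in the stated bound.
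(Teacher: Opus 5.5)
Your brute-force branch, your edge term (Chebyshev plus a median in place of the paper's Bernstein bound is immaterial), and the way you combine the estimates all match the paper. The gap is in the component term. The claim that an output in $[n/s,n]$ has variance at most $n^2/s$ is false.

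Since $X\le n$, what you actually get is
\[
\operatorname{Var}X\le\E X^2\le n\,\E X\le n(\bar c+n/s),
\]
and $\bar c$ can be $\Theta(n)$ even when $F_\phi=0$. For example, let every snapshot be a spanning tree on $n/2$ vertices plus $n/2$ isolated vertices. Then $X=n$ or $X=n/s$ with probability $1/2$ each, so $\operatorname{Var}X\approx n^2/4$. Your sample mean then needs $\Theta(1/\varepsilon^2)$ samples, not $O(1/\varepsilon)$, to reach error $\varepsilon n/8$.

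Each sample is charged a BFS of $O(sn)=O(n/\varepsilon)$ adjacency tests at up to $m$ snapshot bits each. So your own accounting gives $\widetilde O(mn/\varepsilon^3)$ for the component term, not $\widetilde O(mn/\varepsilon^2)$. This is outside the target whenever $K+n\ll n/\varepsilon$ and the sampling branch is the one selected. For instance, $K=n$ and $\varepsilon=n^{-1/3}$ give $mn^2$ against the claimed $\widetilde O(mn^{5/3})$. Choosing the branch by the corrected cost, $m\min\{n^2,(K+n)/\varepsilon^2+n/\varepsilon^3\}$, does not repair this.

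The missing idea is to stop paying the full $1/\varepsilon$ exploration budget on every one of the $1/\varepsilon^2$ samples. The paper does this with the dyadic decomposition~\eqref{eq:dyadic} (Proposition~\ref{prop:c-dyadic}). At level $j$, the variable
\[
Z_j=(a_j/s_H(v))\1_{a_j\le s_H(v)<2a_j}\in[0,1]
\]
comes from a BFS stopped after $2a_j$ discoveries, costing $O(na_jm)$ snapshot queries. Because $\E Z_j$ is divided by $a_j$, it only needs precision $\alpha_j=\varepsilon a_j/(32J)$, which is $O(J^2/(\varepsilon^2a_j^2))$ samples. Each level then costs $\widetilde O(nm/(\varepsilon^2a_j))$, and $\sum_j1/a_j<2$ gives $\widetilde O(nm/\varepsilon^2)$ in total. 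That is absorbed by $m(K+n)/\varepsilon^2$ because $n\le K+n$. The randomized doubling scheme of Chazelle--Rubinfeld--Trevisan would serve equally well. With either substitution, the rest of your argument goes through.
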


\begin{proof}[Proof sketch]
Bernstein sampling with variance bound $\bar e\le K+n$ needs
$\widetilde O((K+n)/\varepsilon^2)$ samples of at most $m$ bits each; the
component term uses the dyadic decomposition with a breadth-first search
stopped after $2^{j+1}$ discoveries and Hoeffding sampling
(Proposition~\ref{prop:c-dyadic}); or read all $O(n^2m)$ bits.  A
dummy-density bootstrap gives the same bound with $F$ in place of $K$ in
expectation (Proposition~\ref{prop:c-adaptive}).
\end{proof}

\section{Matching Bounds in the Dense Regime}\label{sec:dense}

For $r\ge1$ put $S_r(\mathbf G)=\sum_b(\ell_b/m)^r$, and let
$A(\mathbf G)=\sum_b\ell_b(\ell_b+1)/(m(m+1))$ be the uniform-window mean.

\begin{theorem}[Dense lifetime-moment classification]\label{thm:dense}
Let $n\ge192$, $m\ge3$, and $1/n\le\varepsilon\le1/4$.  For every fixed
integer $r\ge2$, estimating either $S_r$ or $A$ to additive error
$\varepsilon n$ has bounded-error query complexities
\[
Q_2=\widetilde\Theta(n\sqrt m/\varepsilon),
\qquad
R_2=\Theta(n^2m)
\]
in the coherent and randomized raw-snapshot models, respectively.
\end{theorem}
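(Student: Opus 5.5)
The upper bounds come directly from the estimators. With the trivial bound $K=B_n$, Theorem~\ref{thm:q-supplied} gives $\wtO(\sqrt{m(B_n+n)}/\varepsilon)=\wtO(n\sqrt m/\varepsilon)$ quantum queries, and reading all $\binom n2 m$ bits gives $O(n^2m)$ classically. This applies to $S_r$ with $\phi=x^r$ and to $A$ with $\phi^{\mathrm{unif}}_m$ (Corollary~\ref{cor:uniform}). The real work is the two lower bounds. I would prove both by one reduction from a composed Boolean problem, $\mathrm{Count}\circ\mathrm{OR}_{\Theta(m)}$. The outer level has $N=\Theta(n^2)$ independent blocks. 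In each block, one designated edge $e$ is either present in every snapshot, or missing from exactly one central snapshot $t\in[m/3,2m/3]$.

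\textbf{Construction.} Fix a base graph whose cycle space is spanned by many independent candidate edges. For example, take a spanning structure such as a fixed star or a complete bipartite backbone, present at all times, and let each non-backbone pair be a candidate. Each candidate closes its own cycle with the backbone. The edges are then independent in the cycle space, so each candidate contributes its own bar(s), and by Lemma~\ref{lem:window-rank} the barcode is a disjoint union of per-edge contributions. A full candidate yields one bar with $\ell=m$ and weight $\phi(1)=1$. A candidate deleted at $t$ yields two bars with lifetimes $t-1$ and $m-t$ and weight at most $\phi((t-1)/m)+\phi((m-t)/m)\le 2(2/3)^2<1$ for $r\ge2$, and similarly for $A$. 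So each full bar beats a split bar by a constant gap $\gamma>0$; this fails for $r=1$, which explains the threshold. Choose $\Theta(n^2)$ candidate pairs. The instance promise says the number of full candidates is either $N/2$ or $N/2+c\,\varepsilon n/\gamma$. An $\varepsilon n$-accurate estimate distinguishes these, so $F$ then decides a gapped counting problem composed with an OR (a search for the missing snapshot) on $\Theta(m)$ bits. Using $n\ge192$ and $\varepsilon\ge1/n$, the difference $\Theta(\varepsilon n)\le N$ fits.

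\textbf{Quantum bound.} Approximate counting with additive gap $\Delta=\Theta(\varepsilon n)$ among $N=\Theta(n^2)$ items, near the midpoint, has quantum query complexity $\Omega(\sqrt{N(N-k)}/\Delta)\cdot$, i.e.\ $\Omega(N/\Delta)=\Omega(n/\varepsilon)$ after the standard $\Omega(\sqrt{N}\cdot\sqrt N/\Delta)$ computation. A cleaner route is Nayak–Wu: threshold estimation at density $1/2$ with additive gap $\Delta$ costs $\Omega(N/\Delta)$ only when the density is near $1/2$. I would therefore choose an asymmetric variant, with $\Theta(\varepsilon n)$-scale differences around a count of $\Theta(n^2)$, so the bound reads $\Omega(\sqrt{N}\sqrt{N}/\Delta)$, which is $\Omega(n/\varepsilon)$ when $\Delta=\varepsilon n$. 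The inner OR has adversary bound $\Omega(\sqrt m)$. By the composition theorem for the general adversary bound, $\mathrm{ADV}^\pm(f\circ g)=\mathrm{ADV}^\pm(f)\mathrm{ADV}^\pm(g)$, which applies to promise outer functions with total inner function through the promise-compatible variant. This gives $\Omega(n\sqrt m/\varepsilon)$, matching up to logs.

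\textbf{Classical bound and main obstacle.} Here I would take the outer gap constant (error $n/4$) and use $N/2$ versus $N/2+\Theta(n)/\gamma$ full edges. Classically, even a constant-fraction difference must be detected. For the $\Theta(n^2m)$ bound I would instead use $N/2$ versus $N$ broken edges, which gives a constant-factor gap in $F$ of size $\Theta(n^2)\gg n/4$. I would then argue via Yao's principle with a hard distribution: uniform random breaking time, and random choice of which edges break. An adaptive algorithm that reads $o(n^2m)$ bits has, for most candidates, inspected $o(m)$ of that edge's snapshots. Its view therefore cannot distinguish a break at an uninspected time from no break, and a coupling/hybrid over blocks bounds the success advantage.

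The main obstacle is making this adaptive argument rigorous. I would reveal information to the algorithm through a censoring device: for each block, once the algorithm has queried $m/2$ of its snapshots, the analyst reveals the whole block. This only helps the algorithm. Under the product distribution, each unrevealed block then remains an unbiased coin conditional on the view, so Hoeffding over $\Theta(n^2)$ unrevealed blocks shows that the count difference is invisible unless $\Omega(n^2)$ blocks are revealed. That requires $\Omega(n^2m)$ queries.
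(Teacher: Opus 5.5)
Your upper bounds are the paper's, but both of your lower-bound reductions break at a concrete step.

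\textbf{Quantum.} You claim that an $\varepsilon n$-accurate estimate distinguishes $N/2$ from $N/2+c\varepsilon n/\gamma$ full rows. This is false, so there is no Boolean $f\circ g$ for the composition theorem to act on. A split row contributes $w(t)=\phi(\frac{t-1}{m})+\phi(\frac{m-t}{m})$, which depends on where its zero sits. For $x^2$ and large $m$ it runs from about $1/2$ at $t\approx m/2$ to about $5/9$ at $t\approx m/3$. With $\Theta(n^2)$ split rows, the aggregate split mass therefore varies over a range of length $\Theta(n^2)$, which swamps the $\Theta(\varepsilon n)$ gain from the extra full rows. For large $n$, an input with $N/2+\Theta(\varepsilon n)$ full rows and all zeros near $m/2$ has smaller $F$ than one with $N/2$ full rows and all zeros near $m/3$. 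So no threshold on $\widehat F$ decides the count. If the inner function instead reports the zero's position, the outer task is no longer Boolean and the product formula is unavailable. Restricting zeros to positions where $w$ is nearly constant shrinks the search range by a polynomial factor.

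The missing idea is that separation is needed only on related pairs. Relate $x$ to $y$ when $y$ is obtained by erasing the zero in exactly $d=\lfloor6\varepsilon n\rfloor+1$ split rows, leaving every other split position unchanged. Then $F(y)-F(x)\ge\frac49d>2\varepsilon n$ pair by pair, and an adversary bound that needs disjoint acceptance intervals only on related pairs applies (Proposition~\ref{prop:metric-adversary}). The positive-weight computation uses $d_X=\binom{B_n-a}{d}$, $d_Y=\binom bdL^d$, $s_X=\binom{B_n-a-1}{d-1}$ and $s_Y=\binom{b-1}{d-1}L^{d-1}$. It gives $\Omega(\sqrt{(B_n-a)bL}/d)=\Omega(n\sqrt m/\varepsilon)$ directly, with no composition theorem.

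\textbf{Randomized.} Your hard pair ``$N/2$ versus $N$ broken edges'' is easy. Reading the central bits of $O(1)$ random rows separates it with $O(m)$ queries. Your censoring argument shows this too, since one revealed block already carries constant information. The $n^2$ factor must come from a density gap of order $1/n$. Make rows full independently with probability $p_\pm=\frac12\pm\frac cn$. Then the means of $F$ differ by $\Theta(n)$, while the variance, including the randomness of split positions, is $O(n^2)$. For a suitable constant $c$, Chebyshev turns an additive-$n/4$ estimator into a constant-advantage distinguisher.

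Second, unrevealed blocks are not ``unbiased coins conditional on the view''. After $q$ one-answers the posterior probability of ``full'' is $p/(1-(1-p)q/L)$, which depends on $p$. Partial probing therefore leaks information, and it must be charged per query. A fresh central query is zero with probability $z_p(q)=(1-p)/(L-(1-p)q)$. For $p\in[\frac14,\frac34]$ one has $z_p\ge\frac1{4L}$, $1-z_p\ge\frac14$ and $|\partial_pz_p|\le\frac{16}{L}$. Each informative query thus adds $O(\Delta^2/L)=O(1/(n^2L))$ to the transcript KL divergence. The chain rule and Pinsker's inequality then force $Q=\Omega(n^2L)=\Omega(n^2m)$. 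Because $p_\pm$ stay bounded away from $0$, no censoring is needed in this dense regime; it becomes necessary only in the sparse theorem, where $p_-$ is small.
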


\begin{proof}[Proof sketch]
Upper bounds: Theorems~\ref{thm:q-supplied} and~\ref{thm:c-supplied} with
$K=B_n$.  Both lower bounds use one gadget: a spanning star in every snapshot
and one optional chord per leaf pair, giving $B_n$ independent temporal rows;
an all-one row is a bar of lifetime $m$, and a row with one zero at a central
time $t\in\mathcal T_m=\{\lceil m/3\rceil,\ldots,\lfloor2m/3\rfloor\}$,
$L=|\mathcal T_m|\ge m/4$, is two bars retaining at most $5/9$ of a full bar's
weight for every $x^r$, $r\ge2$, and for $A$.  Quantum: erasing
$d=\lfloor6\varepsilon n\rfloor+1$ zeros from inputs with $\lfloor B_n/2\rfloor$
all-one rows changes the output by more than $2\varepsilon n$, and the adversary
lemma for real-valued outputs with pairwise-disjoint acceptance intervals
(Proposition~\ref{prop:metric-adversary}) gives $\Omega(n\sqrt m/\varepsilon)$
from the relation degrees.  Randomized: rows independently all-one with
probability $1/2\pm48/n$ separate the means by $\Theta(n)$ against variance
$O(n^2)$, while a fresh central query in a row contributes only $O(1/(n^2L))$
to the transcript divergence, so Pinsker's inequality forces
$\Omega(n^2L)=\Omega(n^2m)$ queries (Appendix~\ref{app:dense}).
\end{proof}

For $\phi(x)=x$ the window is a single snapshot, the estimators run with window
length one at cost $\widetilde O(\sqrt{K+n}/\varepsilon)$ with no $\sqrt m$
factor, and a central split removes only $1/m$ of a bar's weight, so the
construction yields no lower bound: the threshold $r\ge2$ is genuine.

\section{Matching Bounds under a Low-Density Leakage Promise}\label{sec:sparse}

\begin{definition}[Central split leakage]\label{def:leakage}
For a probability-generating weight function $\phi$ and $m\ge3$, define
$\lambda_{\phi,m}=\max_{t\in\mathcal T_m}\bigl[\phi\bigl(\tfrac{t-1}{m}\bigr)+\phi\bigl(\tfrac{m-t}{m}\bigr)\bigr]$.
It is the largest weighted mass retained when a full bar is split by one
missing central snapshot.
\end{definition}

The leakage promise below enters only the lower bounds; the upper bound of
Theorem~\ref{thm:q-supplied} holds for every weight.  The constants are
conservative.

\begin{theorem}[Leakage-controlled low-density classification]
\label{thm:low-density}
Assume
\[
n\ge2^{14},\qquad m\ge16,
\qquad \frac1n\le\varepsilon\le\frac18,
\]
and let $K$ be an integer satisfying
\[
256n\le K\le\frac18\binom{n-1}{2}.
\]
Let $\phi$ be a coherently and classically sampleable probability-generating
weight function satisfying
\begin{equation}\label{eq:leakage-promise}
\lambda_{\phi,m}\binom{n-1}{2}\le\frac{\varepsilon n}{64}.
\end{equation}
On the promise $F_\phi(\mathbf G)\le K$, additive-$\varepsilon n$
estimation has query complexities, up to polylogarithmic factors,
\[
Q_2=\widetilde\Theta\!\left(\frac{\sqrt{mK}}\varepsilon\right)
\]
and
\[
R_2=\widetilde\Theta\!\left(
 m\min\left\{n^2,\frac K{\varepsilon^2}\right\}
\right).
\]
The upper bounds are worst-case bounds because $K$ is supplied.  The adaptive
no-$K$ result remains the separate statement of
Theorem~\ref{thm:q-adaptive}; it is not part of this minimax equality.
\end{theorem}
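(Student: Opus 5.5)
The upper bounds come directly from Theorems~\ref{thm:q-supplied} and~\ref{thm:c-supplied}. Since $K\ge256n$, we have $K+n\le 2K$, so the budgets become $\widetilde O(\sqrt{mK}/\varepsilon)$ and $\widetilde O(m\min\{n^2,K/\varepsilon^2\})$. Both are worst-case once $K$ is supplied. All the work is therefore in the two lower bounds.

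For these I would reuse the gadget of Theorem~\ref{thm:dense}, with sparsity control added.

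\textbf{Base structure.} Every snapshot contains a spanning star, and each leaf pair carries an optional chord row, so there are $B_n$ independent temporal rows. An all-one row contributes exactly $\phi(1)=1$. A row with a single zero at a central time $t\in\mathcal T_m$ splits into two bars and contributes at most $\lambda_{\phi,m}$. Rows with other zero patterns are never used.

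\textbf{Using the leakage promise.} Suppose every one of the $B_n$ rows is made a central-split decoy. The total leaked mass is then at most $\lambda_{\phi,m}B_n\le\varepsilon n/64$ by \eqref{eq:leakage-promise}. Hence the decoys are invisible at the scale $\varepsilon n$ and cost essentially nothing against the promise $F_\phi\le K$.

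\textbf{Hard instances.} Fix a set $\mathcal C$ of about $K/2$ candidate rows. All rows outside $\mathcal C$ are decoys. Inside $\mathcal C$, the two hypotheses have $N_0$ full rows versus $N_0+d$ full rows, with $d=\Theta(\varepsilon n)$. The remaining candidates are decoys. The star contributes no cycles, so $F_\phi=N_{\mathrm{full}}+O(\varepsilon n/64)\le K$. The two answers then differ by $d-\varepsilon n/32>2\varepsilon n$ once $d=\lfloor 4\varepsilon n\rfloor+1$, say.

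\textbf{The decision problem.} The task becomes telling apart two counts of all-one rows among $|\mathcal C|=\Theta(K)$ rows. Each row is either all-one or has one hidden zero among $L\ge m/4$ central positions. This is a composition of an approximate-counting (threshold) problem on $K$ bits with $\mathrm{OR}_L$.

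\textbf{Quantum bound.} I would use the real-valued adversary bound (Proposition~\ref{prop:metric-adversary}). The relation pairs an input $x$ with an input $y$ obtained by flipping $d$ decoy rows to full, by erasing their zeros.
\begin{itemize}
\item From $x$, the number of related $y$ is about $\binom{K/2}{d}$.
\item From $y$, the number of related $x$ is about $\binom{N_0+d}{d}L^d$, since each of the $d$ new full rows can have its zero placed at any of $L$ positions.
\item Taking $N_0\approx K/4$, the standard counting-composed-with-search calculation gives $\Omega(\sqrt{K/d}\cdot\sqrt{K}\cdot\sqrt L)$ per-bit load, i.e.\ $\Omega(\sqrt{mK}/\varepsilon)$ up to constants. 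This is the threshold-with-gap bound $\sqrt{K/d}\cdot\sqrt{K}/\sqrt d$ times $\sqrt L$, which simplifies to $K\sqrt L/d$.
\end{itemize}
Rewriting, $K\sqrt{m}/(\varepsilon n)=\sqrt{mK}\cdot\sqrt K/(\varepsilon n)$. This is at least $\sqrt{mK}/\varepsilon$ only when $K\ge n^2$, which fails in the sparse range. So the weighting needs care. I would use the general multiplicative adversary weights for composition, $\mathrm{Adv}(f\circ\mathrm{OR}_L)=\mathrm{Adv}(f)\sqrt L$, together with the fact that distinguishing Hamming weight $N_0$ from $N_0+d$ on $K$ bits with $N_0\approx d$ costs $\Theta(\sqrt{K/d})$. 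To make this fit, choose $N_0=\Theta(\varepsilon n)$ and exploit the fact that the estimate must have relative accuracy. This yields $\sqrt{K/(\varepsilon n)}\cdot\sqrt m$. That matches $\sqrt{mK}/\varepsilon$ only when $\varepsilon n\approx\varepsilon^2$, so it does not match directly.

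\textbf{Fixing the gap with repetition.} The missing factor $1/\sqrt{\varepsilon}$ relative to $n$ must come from a sub-instance trick. I would use $n/\Theta(1)$ disjoint blocks, each an independent threshold problem, and exploit additivity of the error: summing $n$ independent blocks, each needing error $\varepsilon$, composes via a direct-sum adversary argument. This reduces to estimating $\Theta(K)$ candidates to additive $\varepsilon n$, i.e.\ estimating a density $p\approx 1/2$ of $K'$ rows to accuracy $\varepsilon n/K'$, at cost $\Theta(K'/(\varepsilon n))\cdot$ a square-root factor.

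\textbf{Main obstacle.} This adversary weighting is the main obstacle. Concretely, the goal is a count-estimation lower bound of $\Omega(\sqrt{K}/\varepsilon)$ for $\pm\varepsilon n$ accuracy on the $\Theta(K)$ candidates, presumably via the standard counting lower bound $\Omega(\sqrt{N}/\varepsilon')$ for relative density. The factor $n$ would cancel against the density scaling $\sqrt{K}\cdot\sqrt{K}/(\varepsilon n)$ when $K$ is replaced by the active number of full rows, of order $n$. I would try first to choose about $\Theta(n)$ full rows hidden among $\Theta(K)$ candidates.

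\textbf{Randomized bound.} I would follow the Pinsker route of Theorem~\ref{thm:dense}. Each of the $\Theta(K)$ candidate rows is independently full with probability $q$ or $q'$, separated so that the means differ by $3\varepsilon n$. A fresh central query contributes divergence $O((q-q')^2/(qL))$. Summing over queries and comparing with the required separation gives $\Omega(mK/\varepsilon^2)$ when that is below $n^2m$, and $\Omega(n^2m)$ otherwise. The censoring device handles adaptivity, as in Appendix~\ref{app:dense}.
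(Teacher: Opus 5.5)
Your upper bounds are correct and match the paper. The lower bounds, however, are not established, and the obstacle you ran into comes from your instance family, not from the adversary weighting. You fix a candidate set $\mathcal C$ of $\Theta(K)$ rows and only ever flip rows inside it. The algorithm knows $\mathcal C$. Approximately counting $\Theta(K)$ full rows among $|\mathcal C|=\Theta(K)$ rows to additive $d=\Theta(\varepsilon n)$, with each row an $\mathrm{OR}_L$, really does cost only about $K\sqrt m/(\varepsilon n)$ queries: quantum counting with $K/(\varepsilon n)$ iterations times a $\sqrt L$ search. So no choice of $N_0$, composition weights, or block/direct-sum argument on this family can reach $\sqrt{mK}/\varepsilon$ when $K\ll n^2$.

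The missing idea is that the leakage promise lets \emph{every} one of the $B_n=\Theta(n^2)$ rows be a potential flip site, not merely inert background. The paper takes $b=\lfloor K/4\rfloor$, $d=\lceil4\varepsilon n\rceil$, and $a=b-d$. The set $X$ consists of inputs with exactly $a$ full rows and all other $B_n-a$ rows split at a central time, and $Y$ of inputs with exactly $b$ full rows. An input $y$ is related to $x$ when it erases the zero in \emph{any} $d$ of the $B_n-a$ split rows. Then $d_X/s_X=(B_n-a)/d$ replaces your $\Theta(K/d)$, while $d_Y/s_Y=bL/d$ is unchanged. Proposition~\ref{prop:metric-adversary} gives $\sqrt{(B_n-a)bL}/d=\Theta(\sqrt{n^2Km}/(\varepsilon n))=\Theta(\sqrt{mK}/\varepsilon)$. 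The promise holds because $b+\lambda_{\phi,m}(B_n-b)\le K/4+\varepsilon n/64<K$, and the outputs separate because $d(1-\lambda_{\phi,m})>2\varepsilon n$. In short, the hard instance hides $\Theta(K)$ full rows among $\Theta(n^2)$ candidates, not $\Theta(K)$ or $\Theta(n)$ full rows among $\Theta(K)$ candidates.

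Your randomized sketch has the same defect. With $\Theta(K)$ candidate rows the full-row probability $q$ is constant, so the $1/q$ factor in the per-query divergence $(q-q')^2/(qL)$ gains nothing. Your instance also has only $\Theta(Km)$ bits, so it cannot give $\Omega(n^2m)$ when $K/\varepsilon^2\ge n^2$. Pinsker then yields only about $m\min\{K,K^2/(\varepsilon n)^2\}$, not the claimed $m\min\{n^2,K/\varepsilon^2\}$.

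The paper spreads the randomness over all $B_n$ rows, with $p_-=K/(4B_n)$, $p_+=(K/4+D)/B_n$, and $D=16(\varepsilon n+\sqrt K)$. The small $p_-$ then produces $KB_nL/D^2=\Theta(m\min\{n^2,K/\varepsilon^2\})$. Two further ingredients are needed that you do not supply:
\begin{itemize}
\item Because $p_-$ is small, $1-z_p(q)$ degenerates as one-answers accumulate in a row. This is why the censoring device of Proposition~\ref{prop:censored-kl} is needed: it reveals the row type to the analysis after $\lfloor L/2\rfloor$ one-answers. The argument of Appendix~\ref{app:dense} does not suffice, since it uses $p\in[1/4,3/4]$.
\item The product distributions are not supported on the promise $F_\phi\le K$. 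You must condition on the Chebyshev good events $|F-\mu_\pm|\le4\sqrt K$ and absorb the resulting total-variation loss before applying Pinsker.
\end{itemize}
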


\begin{proof}[Proof sketch]
Upper bounds: Theorems~\ref{thm:q-supplied} and~\ref{thm:c-supplied} with
$K\ge256n$.  Quantum lower bound: the gadget of Theorem~\ref{thm:dense} with
$\lfloor K/4\rfloor$ all-one rows and every other row split; by
\eqref{eq:leakage-promise} the split population contributes at most
$\varepsilon n/64$, so all related inputs satisfy $F_\phi\le K$, and the same
relation with $d=\lceil4\varepsilon n\rceil$ gives
$\Omega(\sqrt{mK}/\varepsilon)$.  Randomized lower bound: rows independently
all-one with probabilities $p_-=K/(4B_n)$ and $p_+=(K/4+D)/B_n$,
$D=16(\varepsilon n+\sqrt K)$;
Chebyshev's inequality places both distributions, up to probability $1/32$,
on promise inputs whose statistics lie more than $\varepsilon n$ from a common
threshold.  Because $p_-=\Theta(K/n^2)$ is small, late queries in a row are
ill-conditioned; Proposition~\ref{prop:censored-kl} censors each row after
$\lfloor L/2\rfloor$ one-answers by handing the analysis transcript, but not
the algorithm, the row's type, and bounds the divergence of any adaptive
$Q$-query transcript by $128Q\Delta^2/(p_-L)$, $\Delta=D/B_n$, so Pinsker's
inequality yields $\Omega(KB_nL/D^2)=\Omega(m\min\{n^2,K/\varepsilon^2\})$
(Appendix~\ref{app:sparse}).
\end{proof}

\section{Weight Functions and Resources}\label{sec:weights}

\begin{corollary}[Logarithmic power weights]\label{cor:power}
For $\phi_r(x)=x^r$, $\lambda_{\phi_r,m}\le2(2/3)^r$.  Consequently the leakage
promise~\eqref{eq:leakage-promise} holds whenever
$r\ge\lceil\log_{3/2}(128B_n/(\varepsilon n))\rceil$.  The simpler sufficient
choice $r\ge\lceil\log_{3/2}(64n/\varepsilon)\rceil$ follows from $B_n\le n^2/2$.
\end{corollary}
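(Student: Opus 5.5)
The plan is to bound each summand of the leakage separately on the central range, then solve the promise inequality~\eqref{eq:leakage-promise} for $r$.

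\textbf{Bounding each summand.} For $t\in\mathcal T_m=\{\lceil m/3\rceil,\ldots,\lfloor 2m/3\rfloor\}$ I use the two pieces of the split bar:
\[
\frac{t-1}{m}<\frac{t}{m}\le\frac23,
\qquad
\frac{m-t}{m}\le\frac{m-m/3}{m}=\frac23 .
\]
Since $x\mapsto x^r$ is nondecreasing on $[0,1]$, each term satisfies $\phi_r(\cdot)\le(2/3)^r$. Taking the maximum over $t$ in Definition~\ref{def:leakage} then gives $\lambda_{\phi_r,m}\le 2(2/3)^r$. One could sharpen this using convexity, since $\frac{t-1}{m}+\frac{m-t}{m}<1$, but the crude bound is all that is needed.

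\textbf{Sufficient condition on $r$.} It suffices to have $2(2/3)^rB_n\le\varepsilon n/64$, which is equivalent to
\[
(3/2)^r\ge \frac{128B_n}{\varepsilon n},
\qquad\text{i.e.}\qquad
r\ge\log_{3/2}\!\bigl(128B_n/(\varepsilon n)\bigr).
\]
Rounding up to the ceiling preserves this inequality. The argument of the logarithm exceeds $1$ in the relevant regime ($B_n\ge n$ once $n\ge5$, and $\varepsilon\le 1/4$), so the ceiling gives a positive integer and $\phi_r$ is a legitimate PGF weight with $R\equiv r$.

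\textbf{The simpler choice.} From $B_n=\binom{n-1}{2}\le n^2/2$ we get
\[
\frac{128B_n}{\varepsilon n}\le\frac{64n}{\varepsilon}.
\]
Because $\log_{3/2}$ is monotone, any $r\ge\lceil\log_{3/2}(64n/\varepsilon)\rceil$ also satisfies the previous condition.

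No step here is a real obstacle. The only point needing care is the endpoint bookkeeping: checking that both $(t-1)/m$ and $(m-t)/m$ are at most $2/3$ at the extremes $t=\lceil m/3\rceil$ and $t=\lfloor 2m/3\rfloor$. This follows from $t\ge m/3$ and $t\le 2m/3$.
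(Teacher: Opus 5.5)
Your proposal is correct and follows the paper's proof: both normalized pieces of a central split are at most $2/3$, monotonicity of $x^r$ gives $\lambda_{\phi_r,m}\le2(2/3)^r$, and rearranging $2(2/3)^rB_n\le\varepsilon n/64$ together with $B_n\le n^2/2$ yields both degree thresholds. Your endpoint check at $t=\lceil m/3\rceil$ and $t=\lfloor2m/3\rfloor$ is the only bookkeeping the paper leaves implicit.
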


Let $R=1+\operatorname{Bin}(s,1/2)$, with probability generating function
$\phi_s^{\mathrm{bin}}(x)=x\bigl(\tfrac{1+x}{2}\bigr)^s$; it smoothly emphasizes
bars close to full lifetime and uses at most $s+1$ time samples.

\begin{corollary}[Binomial soft-long-bar weight]\label{cor:binomial}
For $\phi_s^{\mathrm{bin}}$, $\lambda_{\phi_s^{\mathrm{bin}},m}\le\frac43(5/6)^s$.
Thus~\eqref{eq:leakage-promise} holds whenever
\[
s\ge\left\lceil\log_{6/5}\frac{256B_n}{3\varepsilon n}\right\rceil,
\qquad\text{and it is sufficient to take}\qquad
s\ge\left\lceil\log_{6/5}\frac{128n}{3\varepsilon}\right\rceil.
\]
\end{corollary}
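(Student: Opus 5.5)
The bound on $\lambda_{\phi_s^{\mathrm{bin}},m}$ comes from evaluating $\phi_s^{\mathrm{bin}}$ on the two pieces of a centrally split bar. The two conditions on $s$ then follow by inserting that bound into~\eqref{eq:leakage-promise}.

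Fix $t\in\mathcal T_m$, so $m/3\le t\le 2m/3$. Put $x=(t-1)/m$ and $y=(m-t)/m$. Then
\[
x,y\ge0,\qquad x,y\le\tfrac23,\qquad x+y=\tfrac{m-1}{m}\le1.
\]
The function $\phi_s^{\mathrm{bin}}(u)=u\bigl((1+u)/2\bigr)^s$ is increasing on $[0,1]$. Hence
\[
\phi_s^{\mathrm{bin}}(x)+\phi_s^{\mathrm{bin}}(y)\le (x+y)\Bigl(\frac{1+2/3}{2}\Bigr)^s\le\Bigl(\frac56\Bigr)^s .
\]
Here we bound the factor $((1+u)/2)^s$ by its value at $u=2/3$, and then use $x+y\le1$. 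This is already stronger than the stated bound $\tfrac43(5/6)^s$. If one prefers not to use $x+y\le1$, the cruder estimate $\phi_s^{\mathrm{bin}}(x)+\phi_s^{\mathrm{bin}}(y)\le 2\cdot\tfrac23(5/6)^s=\tfrac43(5/6)^s$ matches the statement exactly. I would present the crude estimate, since it is what the statement records. Taking the maximum over $t\in\mathcal T_m$ gives the leakage bound.

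For the promise, it suffices that $\tfrac43(5/6)^sB_n\le\varepsilon n/64$. This is equivalent to
\[
(6/5)^s\ge \frac{256B_n}{3\varepsilon n},
\]
and it holds for every integer $s\ge\lceil\log_{6/5}(256B_n/(3\varepsilon n))\rceil$. For the simpler sufficient choice, use $B_n=\binom{n-1}{2}\le n^2/2$. Then
\[
\frac{256B_n}{3\varepsilon n}\le\frac{128n}{3\varepsilon},
\]
and monotonicity of $\log_{6/5}$ shows that the second threshold dominates the first.

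No step is a real obstacle. The only point needing care is the monotonicity of $\phi_s^{\mathrm{bin}}$ together with the endpoint bound $\max\{x,y\}\le 2/3$ on $\mathcal T_m$, which ensures that both split pieces are evaluated at arguments at most $2/3$.
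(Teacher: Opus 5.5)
Your proof is correct and matches the paper's argument: both normalized split lengths are at most $2/3$, so each piece contributes at most $(2/3)(5/6)^s$; summing the two pieces and rearranging the promise, with $B_n\le n^2/2$ for the simpler threshold, gives the corollary. Your side remark is also valid, though not needed: using $x+y=(m-1)/m\le1$ sharpens the leakage bound to $(5/6)^s$.
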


Both follow because a central split leaves two pieces of normalized length at
most $2/3$.  Window preparation for either weight uses
$C_\phi=\widetilde O(r_{\max}\log m)$ gates and $O(r_{\max}\log m)$ qubits with
no graph data.  The supplied-$K$ quantum algorithm has gate work
$\widetilde O(\sqrt{m(K+n)}(C_O+\log m)/\varepsilon)$ for the oracle calls plus
$\widetilde O(\sqrt{K+n}\,C_\phi/\varepsilon+\sqrt n/\varepsilon^2)$ for window
preparation and list membership, polylogarithmic live workspace beyond the
$O(\varepsilon^{-1}\log n)$ qubits of the explicit discovered list, depth at
most its gate work, and an $O(\log(n/\varepsilon))$-bit output
(Appendix~\ref{app:weights}).

\section{Limits of Recovering the Positive-Lifetime Histogram}\label{sec:histogram}

All statistics considered here discard the positions of bars and retain only
their lifetimes. Write $(h_1,\ldots,h_m)$ for the histogram of bars with a
positive snapshot lifetime. Bars with zero snapshot lifetime are invisible to
every PGF weight used here. Even this positive-lifetime histogram requires many
independent weighted statistics.

\begin{theorem}[Fewer than $m$ rational PGF weights do not determine the
positive-lifetime histogram]\label{thm:statistic-nonrecovery}
Fix $m\ge2$ and rational-coefficient probability-generating weight functions
$\phi_1,\ldots,\phi_k$ with $k<m$.  There exist two graph snapshot sequences,
possibly on a larger common vertex set, whose $H_1$ zigzag barcodes have
different positive-lifetime histograms but satisfy
\[
F_{\phi_j}(\mathbf G)=F_{\phi_j}(\mathbf G')
\qquad(1\le j\le k).
\]
The sequences may be chosen with a common spanning star in every snapshot.
\end{theorem}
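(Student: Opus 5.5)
The plan is a dimension count followed by a realization step. By definition,
$F_\phi(\mathbf G)=\sum_{\ell=1}^m h_\ell\,\phi(\ell/m)$ plus a contribution of $\phi(0)=0$ from zero-lifetime bars. The $k$ statistics are therefore the image of the positive-lifetime histogram $h=(h_1,\ldots,h_m)$ under the linear map
\[
\Phi:\mathbb Q^m\to\mathbb Q^k,\qquad h\mapsto\Bigl(\sum_{\ell=1}^m h_\ell\,\phi_j(\ell/m)\Bigr)_{j\le k}.
\]
Since every $\phi_j$ has rational coefficients, each entry $\phi_j(\ell/m)$ is rational. Since $k<m$, $\Phi$ has a nonzero rational kernel vector. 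Clearing denominators gives a nonzero integer vector $z\in\mathbb Z^m$ with $\Phi z=0$. Write $z=z^+-z^-$ with $z^\pm\ge0$ entrywise and set $h=z^+$ and $h'=z^-$. These are nonnegative integer histograms, they differ because $z\neq0$, and $\Phi h=\Phi h'$. It remains to realize any prescribed nonnegative integer histogram as the positive-lifetime histogram of an $H_1$ graph zigzag with a spanning star. We then use the same vertex set for both histograms, namely one large enough for the larger total count.

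For the realization I would reuse the gadget from the proof of Theorem~\ref{thm:dense}. Take vertex $0$ as the center of a star on $[n]$ that is present in every snapshot. Choose $n$ with $B_n=\binom{n-1}{2}\ge\max\{\sum_\ell h_\ell,\sum_\ell h'_\ell\}$, so that there are enough leaf pairs. Assign to $h_\ell$ distinct leaf pairs $\{u,v\}$ the chord pattern ``present exactly at snapshots $1,\ldots,\ell$'', and leave all other chords absent at all times. Every snapshot is then the star plus a set of chords. Its cycle space has the basis of triangles $\tau_{uv}=\{0u\}+\{uv\}+\{v0\}$, one for each present chord, and the same holds at every intersection node.

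The key step is that $H_1(\mathcal Z(\mathbf G))$ splits as a direct sum over leaf pairs. All maps are inclusions, and each inclusion sends $\tau_{uv}$ to $\tau_{uv}$ whenever the chord $uv$ is present in both nodes. Hence the module is the direct sum, over chords $uv$, of the submodule spanned by $\tau_{uv}$. That submodule is one-dimensional exactly at the nodes where $uv$ is present and zero elsewhere, with identity maps in between. A chord present at snapshots $a,\ldots,a+\ell-1$ is present at the intermediate intersection nodes and absent at the junctions just outside this range, so its support is a single interval containing exactly $\ell$ snapshot nodes. It therefore contributes one bar of lifetime $\ell$. This is consistent with Lemma~\ref{lem:window-rank}, which gives the window rank as the number of chords surviving the window. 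By Krull--Schmidt the barcode is the union of these bars, so the positive-lifetime histogram is exactly the prescribed one, and no zero-lifetime bars arise.

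The direct-sum step is the only point needing care. I would justify it by noting that the triangle basis is simultaneously compatible with every structure map, so the module is a direct sum of interval modules and uniqueness of interval decompositions applies. The remaining concerns are bookkeeping: disjoint leaf pairs must be assigned, and the equal statistics hold for every $j$ because $\Phi z=0$ is an exact identity over $\mathbb Q$. Alternatively, one could realize both histograms with a common extra offset, but this is unnecessary.
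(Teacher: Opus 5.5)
Your proposal is correct and follows the paper's proof essentially step for step: take a nonzero integer kernel vector $z=z^+-z^-$ of the $k\times m$ rational matrix $\bigl(\phi_j(\ell/m)\bigr)$, then realize each histogram by a spanning star with one chord per bar, present on a contiguous block of snapshots, so that the fundamental triangles split the module into interval summands. The paper also notes that positivity of the matrix entries forces $z$ to have entries of both signs, so both histograms are nonempty; as you observe, this is not needed, since $z^+\neq z^-$ already follows from $z\neq0$.
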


\begin{proof}[Proof sketch]
The matrix $A_{j\ell}=\phi_j(\ell/m)$ has a nonzero integer kernel vector
$z$ with entries of both signs, since all entries of $A$ are positive;
$z=z^+-z^-$ gives two distinct positive-lifetime histograms with $Az^+=Az^-$,
each realized by a spanning star with one chord per requested bar present on a
contiguous block of the requested length (Appendix~\ref{app:weights}).
\end{proof}

\section{Related Work}\label{sec:related}

\paragraph{Persistence, zigzags, and scalar summaries.}
Persistent homology was founded in~\cite{ELZ2002,ZomorodianCarlsson2005}
(modern treatment: \cite{DeyWang2022}); zigzag persistence and its connection
to time-varying data in~\cite{CarlssonDeSilva2010,CarlssonDeSilvaMorozov2009},
with efficient matrix algorithms in~\cite{MilosavljevicMorozovSkraba2011}.
Landscapes, silhouettes, persistence curves, and linear representations give
broad families of barcode summaries
~\cite{Bubenik2015,ChazalSilhouette,ChungLawson,DivolPolonik2019,WuKimRinaldo2024},
and total persistence is a standard lifetime aggregate
~\cite{CohenSteinerEtAl2010,Bubenik2015}; our statistics are scalar linear
representations whose weights depend only on persistence, not new invariants.

\paragraph{Algorithms for zigzag and multiparameter persistence.}
The graph zigzag barcode is computable in near-linear time from an
explicit update sequence~\cite{DeyHou2021}, with faster general algorithms
in~\cite{DeyHou2022}; related work treats barcode updates, absolute and
relative zigzags, optimal persistent cycles, graph-persistence updates, and
representatives~\cite{DeyHouUpdate,DeyHouAssociation,DeyHouOptimal,DeyHouParsa2023,
DeyHouMorozov2025,DeyHouMorozov2026}.  These are the comparators when an update
stream is supplied; our snapshot-bit lower bounds do not apply there.
Multiparameter modules admit bottleneck-distance computation, rectangular
approximation, and decomposition~\cite{DeyXin2018,DeyXin2019,DeyXin2021,DeyXin2022};
generalized-rank information serves vectorization and
learning~\cite{XinEtAlGRIL2023,XinEtAlTopInG2025,MukherjeeEtAlDGRIL2026}; a
recent preprint proposes an unfolding from generalized ranks to zigzag
persistence~\cite{DeyXinUnfolding}, on which no theorem here depends; and the
two-parameter generalized-rank invariant is computed via zigzags, with its
diagrams, sparsification, and size studied
in~\cite{DeyKimMemoli2024,KimMemoli2021,CarriereKimKim2025,KimKimLee2025}.
Output-sensitive persistence~\cite{ChenKerber2013}, clearing and chunk
compression~\cite{BauerKerberReininghaus2014}, streaming towers
~\cite{KerberSchreiber2019}, nested dissection~\cite{KerberSheehySkraba2016},
multiparameter decomposition~\cite{DeyJendrysiakKerber2025}, and
matrix-multiplication-time persistence~\cite{MorozovSkraba2025} give a stronger
classical context than generic reduction; our comparator is the strongest
algorithm under the same raw oracle and output contract.

\paragraph{Quantum algorithms for topological data analysis and query primitives.}
Quantum TDA began with Betti-number estimation~\cite{LloydGarneroneZanardi2016},
whose hidden costs were identified in~\cite{GunnKornerup2019}; later work
studies persistent Betti numbers, quantum persistent homology, and routes to
provable advantage~\cite{Hayakawa2022,AmeneyroMaroulasSiopsis2024,GyurikCadeDunjko2022},
dequantization and complexity-theoretic analyses delimit when such advantages
are possible~\cite{ApersGriblingSenSzabo2023,SchmidhuberLloyd2023,BerryEtAl2024},
and recent algorithms reduce qubits or obtain speedups under explicit
topological and spectral promises~\cite{McArdleGilyenBerta2026,GyurikEtAl2026,HayakawaChenHsieh2026}.
Our target and access contract differ: one additive scalar from a temporal
graph, with no spectral-gap, homology-state, or QRAM assumption.
Static circuit rank can be estimated in the adjacency-matrix model~\cite{DKW};
our upper bounds use established search and counting
machinery~\cite{BBHT,BHMT,AaronsonRall2020,HamoudiMagniez2019}, related quantum
graph algorithms study connectivity under adjacency
access~\cite{DurrHeiligmanHoyerMhalla2006,JarretJefferyKimmelPiedrafita2018},
and the lower-bound ingredients are established~\cite{Ambainis2000,BunKothariThaler}.
Quantum frequency moments concern powers of item frequencies in a different
input model with a relative-error objective~\cite{MontanaroMoments}.  To our
knowledge, neither the random-extrema representation nor a query
classification of a persistence statistic under snapshot access appears in
prior work; the nearest results are~\cite{DKW}, the update-stream
algorithms~\cite{DeyHou2021,DeyHou2022}, and the persistence-curve
frameworks~\cite{ChungLawson,DivolPolonik2019}, which contain the invariant
family but do not address query complexity.

\section{Discussion and Open Problems}\label{sec:discussion}

The algorithms output one additive scalar, not a barcode, representatives, a
bottleneck approximation, or a guarantee retaining each rare long bar; an
allowance $\varepsilon n\ge1$ can hide one full-length bar, and conversely
Theorem~\ref{thm:statistic-nonrecovery} shows that a few scalars cannot
reconstruct the barcode.  The static identity and the quantum subroutines are
established; what is specific here is the random-extrema identity, the
split-bar reduction that changes the averaged output directly, the leakage
theorem tying instance-sensitive optimality to the weight's response to
broken bars, and the randomized bound against fully adaptive probing.  Two questions would substantially extend these results.
\begin{enumerate}
\item For $x^r$ with fixed $r\ge2$ and $F_\phi\le K\ll n^2$, is the complexity
$\widetilde\Theta(\sqrt{mK}/\varepsilon)$, or do constant-leakage weights admit
faster algorithms on sparse instances?
\item Lemma~\ref{lem:window-rank} uses the absence of two-dimensional faces; is
there a sampling representation of degree-$k$ statistics of simplicial
zigzags with a sublinear estimator?
\end{enumerate}

\appendix

\section{Proofs for Sections~\ref{sec:setup} and~\ref{sec:identity}}\label{app:identity}

\begin{proof}[Proof of Lemma~\ref{lem:window-rank}]
All graph cycle spaces are literal subspaces of the edge-chain space of the
union graph, because there are no degree-two boundaries.  A compatible vector
in the limit is therefore one cycle supported in every snapshot of the
window, equivalently a cycle of $G_{[a,b]}$.  Conversely, every such cycle is a
compatible family.  Its image in the colimit is nonzero whenever the cycle is
nonzero: the maps into the common ambient edge-chain space agree on the
zigzag and factor the canonical map.  Hence the canonical map is injective on
the common cycle space.  The last equality is Euler's formula for a graph.
\end{proof}

\begin{proof}[Proof of Theorem~\ref{thm:extrema}]
Fix an interval summand $b$ and condition on $R=r$.  Its contribution to the
generalized rank of $I_R$ is one exactly when the entire sampled window lies in
its support.  Since an interval support is contiguous, this is equivalent to
all $r$ sampled snapshot indices lying among the $\ell_b$ snapshot nodes of the
bar.  The probability is $(\ell_b/m)^r$.  If $\ell_b=0$, the probability is
zero because $r\ge1$.  Generalized rank is additive under direct sums, so
\[
\E[\rho_M(I_R)\mid R=r]=\sum_b(\ell_b/m)^r.
\]
Averaging over $R$ proves the first identity.  The graph statement follows
from Lemma~\ref{lem:window-rank}.
\end{proof}

\begin{corollary}[Nonnegative polynomials and normalization]\label{cor:scale}
Let $\psi(x)=\sum_{r=1}^{r_{\max}}c_rx^r$ with $c_r\ge0$ and
$C=\sum_rc_r>0$.  Put $\phi=\psi/C$.  Then
\[
\sum_b\psi(\ell_b/m)=C\,F_\phi(\mathbf G).
\]
Thus additive error $\varepsilon n$ for the unnormalized statistic requires
additive error $(\varepsilon/C)n$ for $F_\phi$: every bound below applies
with $\varepsilon$ replaced by $\varepsilon/C$, so $1/\varepsilon$ terms scale
by $C$, $1/\varepsilon^2$ terms scale by $C^2$, and the precision regime
becomes $C/n\le\varepsilon$.  A constant term $c_0$ is excluded: it
would count all bars, including bars that contain no sampled snapshot, and is
not represented by a nonempty random window.
\end{corollary}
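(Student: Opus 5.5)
The identity is linear in the weight, so I will prove it by rewriting $\psi$ as a rescaled probability generating function and then applying the definition of $F_\phi$.

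\textbf{Step 1: $\phi$ is a valid PGF weight.} Since every $c_r\ge0$ and $C>0$, the numbers $a_r=c_r/C$ are nonnegative and sum to one. They are supported on $\{1,\ldots,r_{\max}\}$, so they define an input-independent law for $R$. Its probability generating function is exactly $\phi=\psi/C$.

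\textbf{Step 2: the identity.} By the definition of $F_\phi$,
\[
C\,F_\phi(\mathbf G)=C\sum_b\phi(\ell_b/m)=\sum_b\psi(\ell_b/m).
\]
If a sampling interpretation is wanted, Theorem~\ref{thm:extrema} then gives $\sum_b\psi(\ell_b/m)=C\,\E\,\beta_1(G_{I_R})$.

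\textbf{Step 3: transferring bounds.} An estimate $\widehat F$ with $|\widehat F-F_\phi|\le(\varepsilon/C)n$ yields $C\widehat F$, which is within $\varepsilon n$ of the unnormalized statistic. Conversely, any $\varepsilon n$-estimator for $\sum_b\psi(\ell_b/m)$, divided by $C$, is an $(\varepsilon/C)n$-estimator for $F_\phi$. So the two problems are equivalent under the substitution $\varepsilon\mapsto\varepsilon/C$, and both upper and lower bounds carry over.

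Under this substitution, $1/\varepsilon$ becomes $C/\varepsilon$ and $1/\varepsilon^2$ becomes $C^2/\varepsilon^2$. The precision hypothesis $1/n\le\varepsilon/C$ becomes $C/n\le\varepsilon$.

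\textbf{Step 4: the constant term.} This is the only point needing care. A term $c_0$ would contribute $c_0$ for every bar, including bars with $\ell_b=0$. Every window $I_R$ with $R\ge1$ contains at least one snapshot node. By the proof of Theorem~\ref{thm:extrema}, such a window is contained in the support of a bar only if that bar contains a sampled snapshot. Hence bars supported only on junction nodes contribute zero to every window rank. The constant term therefore has no representation as an expectation over nonempty windows.

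No step is a real obstacle; the argument is bookkeeping.
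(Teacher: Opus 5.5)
Your Steps 1--3 are correct and are all the paper uses. The identity is linearity of $\psi\mapsto\sum_b\psi(\ell_b/m)$, and the error transfer is the exact rescaling $\widehat F\mapsto C\widehat F$.

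The problem is Step 4, which rests on zero-lifetime bars. In the setting of this corollary there are none. Every structure map of $H_1(\mathcal Z(\mathbf G))$ is an inclusion of cycle spaces $Z_1(G_t\cap G_{t+1})\hookrightarrow Z_1(G_t)$ or $Z_1(G_{t+1})$ (the fact behind Lemma~\ref{lem:window-rank}), hence injective. In an interval decomposition, injectivity forces every bar that contains a junction node to contain both adjacent snapshot nodes. So $\ell_b\ge1$ for all bars, and your argument proves nothing about $F_\phi(\mathbf G)$.

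Worse, the conclusion you draw, that a constant term has no representation as an expectation over nonempty windows, is false for graphs. Take $m=2$, $\psi(x)=x^2$, $c_0=\tfrac14$. For $\ell\in\{1,2\}$ one checks $\tfrac14+(\ell/2)^2=\tfrac54\,\phi'(\ell/2)$ with the PGF $\phi'(x)=\tfrac35x+\tfrac25x^2$. Hence $\sum_b\bigl(\tfrac14+(\ell_b/2)^2\bigr)=\tfrac54F_{\phi'}(\mathbf G)$, which Theorem~\ref{thm:extrema} writes as a nonempty-window expectation. More generally, M\"obius inversion over windows absorbs any $0<c_0\le\psi(2/m)-2\psi(1/m)$ into a nonnegative window law.

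Your argument is sound for a general module $M$ as in Theorem~\ref{thm:extrema}. There, a single bar supported on one junction node has statistic $c_0>0$ but every window rank zero; the paper's allowance $\ell_b=0$ matters only in that generality. For graphs, the remark can only be read as a statement about the construction. A constant term $c_0$ would be the $R=0$ branch, which samples no time, produces no window, and would count every bar irrespective of the sample. The random-extrema identity covers only $R\ge1$. Restate Step 4 in that form, or explicitly restrict your impossibility claim to general zigzag modules.
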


\begin{corollary}[Uniform windows give a PGF-weighted statistic]\label{cor:uniform}
Let $I$ be uniform over the $m(m+1)/2$ nonempty snapshot windows.  Then
\[
\E_I\rho(I)=\sum_b\frac{\ell_b(\ell_b+1)}{m(m+1)}
=F_{\phi^{\mathrm{unif}}_m},
\]
where
\[
\phi^{\mathrm{unif}}_m(x)=\frac{x+mx^2}{m+1}.
\]
Equivalently, $R=1$ with probability $1/(m+1)$ and $R=2$ with probability
$m/(m+1)$.
\end{corollary}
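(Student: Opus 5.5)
The plan is to apply Theorem~\ref{thm:extrema} directly to the distribution of $R$ given in the statement, and then read off the weight. First I will compute the law of the window. Take $R=1$ with probability $1/(m+1)$ and $R=2$ with probability $m/(m+1)$, and sample the indices uniformly. If $R=1$, then $I_R=[T,T]$ is uniform over the $m$ singleton windows, so each singleton has probability $1/m$. If $R=2$, the window $[a,b]$ with $a<b$ arises from the ordered pairs $(a,b)$ and $(b,a)$, giving probability $2/m^2$. The singleton $[a,a]$ arises only from $(a,a)$, giving probability $1/m^2$.

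Next I combine the two cases. A window $[a,b]$ with $a<b$ gets total probability
\[
\frac{m}{m+1}\cdot\frac{2}{m^2}=\frac{2}{m(m+1)}.
\]
A singleton gets total probability
\[
\frac{1}{m+1}\cdot\frac1m+\frac{m}{m+1}\cdot\frac{1}{m^2}=\frac{2}{m(m+1)}.
\]
So $I_R$ is exactly uniform over the $m(m+1)/2$ nonempty windows.

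By Theorem~\ref{thm:extrema}, $\E_I\rho(I)=\sum_b\phi(\ell_b/m)$ with
\[
\phi(x)=\frac{x}{m+1}+\frac{m x^2}{m+1}=\phi^{\mathrm{unif}}_m(x).
\]
Evaluating at $x=\ell/m$ gives
\[
\frac{\ell/m+\ell^2/m}{m+1}=\frac{\ell(\ell+1)}{m(m+1)}.
\]
This proves both displayed equalities at once.

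As an independent check on the middle expression, I will also count directly. By additivity of generalized rank, a bar whose support contains $\ell_b$ contiguous snapshot nodes contains exactly $\ell_b(\ell_b+1)/2$ snapshot windows. Dividing by $m(m+1)/2$ recovers the same formula. Bars with $\ell_b=0$ lie on junction nodes only, contain no window, and correctly contribute $0$.

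There is no real obstacle here. The only point needing care is the weighting of singleton windows under $R=2$, since it is the $R=1$ mixture that equalizes diagonal and off-diagonal windows. I will verify that arithmetic explicitly.
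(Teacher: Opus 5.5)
Your proof is correct, and your ``independent check'' is in fact the paper's entire proof. The paper counts the $\ell_b(\ell_b+1)/2$ windows inside each bar, uses that $\rho(I)$ is the number of bars containing $I$, and verifies $\phi^{\mathrm{unif}}_m(\ell/m)=\ell(\ell+1)/(m(m+1))$ by substitution. It then reads the ``equivalently'' clause off the coefficients of $\phi^{\mathrm{unif}}_m$.

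Your main route is different. You show that, for this two-point law of $R$, the random-extrema window $I_R$ is \emph{exactly} uniform on the $m(m+1)/2$ windows, and then you invoke Theorem~\ref{thm:extrema}. Uniformity holds because the $R=1$ mass $1/(m(m+1))$ on each singleton makes up for diagonal pairs being half as likely as off-diagonal ones under $R=2$.

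Your route proves slightly more than the statement needs, since it identifies the window laws and not merely the expectations. In particular, the estimators of Section~\ref{sec:estimators}, which sample $I_R$, literally sample a uniform window for this weight. The paper's direct count, by contrast, does not depend on Theorem~\ref{thm:extrema} and needs only additivity of generalized rank.
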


\begin{proof}[Proof of Corollary~\ref{cor:uniform}]
A bar supported on $\ell$ consecutive snapshot nodes contains exactly
$\ell(\ell+1)/2$ nonempty snapshot windows.  The polynomial identity follows
from $x=\ell/m$.
\end{proof}

\section{Proofs for Section~\ref{sec:estimators}}\label{app:estimators}

\subsection{Dyadic reciprocal-size decomposition and proofs of the estimators}

For a graph $H$ and vertex $v$, let $s_H(v)=|C_H(v)|$.  Then
\[
\frac{c(H)}n=\E_v\frac1{s_H(v)}.
\]
This reciprocal-component identity and its use with bounded local exploration
are established in sublinear minimum-spanning-tree estimation~\cite{ChazelleRubinfeldTrevisan2005};
we adapt the implementation to adjacency-bit access and coherent mean estimation.
Put
\[
J=\left\lceil\log_2\frac{16}{\varepsilon}\right\rceil,
\qquad a_j=2^j,
\qquad
Z_j(v)=\frac{a_j}{s_H(v)}\1_{a_j\le s_H(v)<2a_j}.
\]
Then
\begin{equation}\label{eq:dyadic}
\frac{c(H)}n=\sum_{j=0}^{J-1}\frac{\E Z_j}{a_j}+R,
\qquad 0\le R\le2^{-J}\le\varepsilon/16.
\end{equation}
A reversible exploration that halts once $2a_j$ vertices are discovered
costs $\widetilde O(a_j\sqrt n)$ adjacency queries.  Estimating $\E Z_j$ to error
$\alpha_j=\varepsilon a_j/(32J)$ by amplitude estimation~\cite{BHMT} makes the factors
$a_j$ cancel.  Summing the levels gives
\begin{equation}\label{eq:qcomponents}
\widetilde O(\sqrt n/\varepsilon)
\end{equation}
adjacency queries for additive $O(\varepsilon n)$ component error.  The
sampler is a fixed reversible circuit whose output expectation differs from
the ideal one by at most the budgeted failure probability of its internal
searches, and the analysis estimates the mean of this actual circuit.

\begin{proof}[Proof of Theorem~\ref{thm:q-supplied}]
Sample the random-extrema window $I$ coherently.  Write
\[
\bar e=\E_I|E(G_I)|,
\qquad
\bar c=\E_Ic(G_I),
\qquad
F=F_\phi(\mathbf G).
\]
By Lemma~\ref{lem:window-rank} and Theorem~\ref{thm:extrema},
\[
F=\bar e-n+\bar c,
\qquad
F\le\bar e\le F+n-1\le K+n-1.
\]
Amplitude-estimate the marked probability of a uniform potential edge paired
with the sampled window.  The variance-sensitive counting bound gives
$\widetilde O(\sqrt{K+n}/\varepsilon)$ ideal intersection-edge queries for
additive error $\varepsilon n/4$.  Apply the dyadic component estimator to the
joint distribution of a window and a uniform vertex.  Equation
\eqref{eq:qcomponents} gives $\widetilde O(\sqrt n/\varepsilon)$ ideal
intersection-edge queries and component error below $\varepsilon n/2$.
Every intersection-edge query is an AND over at most $m$ snapshot bits and is
implemented in $\widetilde O(\sqrt m)$ snapshot queries.  The edge term absorbs
the component term because $K+n\ge n$.  Allocate failure and clean-oracle
approximation budgets across all calls and combine
$\widehat e-n+\widehat c$.
\end{proof}

\begin{proof}[Proof of Theorem~\ref{thm:q-adaptive}]
Let $M=\binom n2$ and append $n$ known marked dummy labels to the potential-edge
domain.  On a sampled window, mark a real edge when it survives the window.
The resulting average marked mass is
\[
E'=\bar e+n,
\qquad
F+n\le E'\le F+2n.
\]
It need not be an integer; amplitude estimation acts on its success
probability $E'/(M+n)$.  Since $E'\ge n$, $O(\sqrt n)$ amplitude-estimation
iterations give a constant-factor estimate.  Median amplification reduces the
total bootstrap failure, including implementation error, to
$\delta_0=\Theta(\delta/n^3)$.  From the estimate construct a clipped bound
$U$ such that on the good event $E'\le U\le(5/2)E'$ and always
$n\le U\le2(M+n)$.  Run the final edge estimator with its budget chosen from
$U$, add the dyadic component estimator, and cap by the dense budget.  On the
good event the cost has the stated dependence on $F$; on every branch it has
the dense cap.  Multiplying the cap by $\delta_0$ makes the bad-event
contribution negligible in expectation.  The bootstrap's \emph{total}
circuit error must be $O(\delta_0)$; constant simulation error would invalidate
the instance-sensitive expected bound.
\end{proof}

\begin{proposition}[Classical dyadic component estimation]\label{prop:c-dyadic}
Let $\Pi$ be an input-independent window distribution supported on windows of
length at most $h$.  The mean component count $\E_{I\sim\Pi}c(G_I)$ can be
estimated to additive error $\varepsilon n/2$, with failure at most $\delta$,
using
\[
\widetilde O(nh/\varepsilon^2)
\]
raw snapshot queries.
\end{proposition}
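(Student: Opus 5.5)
We use the dyadic decomposition \eqref{eq:dyadic} with $H=G_I$, averaged over the joint law of $I\sim\Pi$ and a uniform vertex $v\in[n]$. By \eqref{eq:dyadic},
\[
\frac{\E_I c(G_I)}{n}=\sum_{j=0}^{J-1}\frac{\E_{I,v}Z_j}{a_j}+\E_I R_I ,
\]
where $0\le \E_I R_I\le\varepsilon/16$, so truncation costs at most $\varepsilon n/16$.

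For each level $j$ the sampler is classical. Draw $I\sim\Pi$ and $v$ uniformly, then run a breadth-first search of $G_I$ from $v$ that stops as soon as $2a_j$ vertices have been discovered. The search needs one intersection-adjacency bit $\1_{\{u,w\}\in E(G_I)}$ per probed pair. That bit is the AND of at most $h$ snapshot bits, so it costs at most $h$ raw queries and is computed exactly, with no internal failure probability.

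A stopped BFS that has discovered $k<2a_j$ vertices has scanned all potential neighbours of the at most $2a_j$ processed vertices. This costs at most $2a_j n$ adjacency bits, or $O(a_j nh)$ snapshot queries per sample. From the output we decide whether $a_j\le s(v)<2a_j$ and, if so, we know $s(v)$ exactly. Hence $Z_j(v)$ is computed exactly and lies in $[0,1]$, since $Z_j\le a_j/a_j=1$.

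We estimate $\E Z_j$ to error $\alpha_j=\varepsilon a_j/(32J)$ by Hoeffding. This needs $N_j=O(\alpha_j^{-2}\log(J/\delta))=O(J^2\log(J/\delta)/(\varepsilon^2a_j^2))$ samples. Dividing by $a_j$ gives level error $\varepsilon/(32J)$, and summing over $J$ levels gives total relative error at most $\varepsilon/32$. A union bound over the $J$ levels keeps the total failure probability at most $\delta$. After multiplying by $n$, the overall error is $\varepsilon n(1/32+1/16)<\varepsilon n/2$.

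The cost at level $j$ is $N_j\cdot O(a_j nh)=\widetilde O(nh/(\varepsilon^2 a_j))$. Summing over $j$ is dominated by $j=0$ and gives $\widetilde O(nh/\varepsilon^2)$, with $J=O(\log(1/\varepsilon))$ absorbed into the tilde.

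The main obstacle is making the per-sample bound $O(a_jnh)$ hold, because the classical BFS cannot find neighbours cheaply. Two points handle this. First, the search is cut off after $2a_j$ discoveries. Second, each processed vertex is charged a full row of $n-1$ pair checks rather than its degree. The ratio of precision to cost then cancels $a_j$ up to one leftover factor of $1/a_j$, and that factor only helps. An alternative that avoids levels is to stop at $O(1/\varepsilon)$ vertices and use plain $1/\min(s,1/\varepsilon)$ sampling. Its variance analysis is cruder, so the dyadic scheme is the cleaner route.
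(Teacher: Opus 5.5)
Your proposal is correct and follows the paper's proof step for step: the same dyadic decomposition taken jointly over a sampled window and a uniform vertex, the same cutoff of breadth-first search after $2a_j$ discoveries costing at most $2a_jn$ intersection bits of at most $h$ snapshot queries each, and the same Hoeffding precision $\alpha_j=\varepsilon a_j/(32J)$ with a union bound over levels and the geometric sum $\sum_j 1/a_j<2$. One small correction to your closing aside: the level-free scheme that truncates at $O(1/\varepsilon)$ vertices is not merely cruder. Under the same row-scanning accounting it only yields $\widetilde O(nh/\varepsilon^3)$: its $[0,1]$-valued sample can have constant variance (say, half the vertices isolated and the rest in large components), which forces $\Theta(1/\varepsilon^2)$ samples with a per-sample budget of $\Theta(nh/\varepsilon)$, and the dyadic levels are precisely what let large components be sampled less often.
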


\begin{proof}[Proof of Proposition~\ref{prop:c-dyadic}]
Use~\eqref{eq:dyadic} jointly over a sampled window and a uniform vertex.  For
level $j$, perform an ordinary breadth-first search but stop after discovering
$2a_j$ vertices.  In the adjacency-matrix model this scans at most $2a_j$
rows, hence uses $O(na_j)$ intersection-edge queries; evaluating each such bit
uses at most $h$ snapshot queries.  The returned variable $Z_j$ lies in
$[0,1]$.  Hoeffding estimation to error
$\alpha_j=\varepsilon a_j/(32J)$ uses
$O(\alpha_j^{-2}\log(J/\delta))$ samples.  The level cost is therefore
\[
\widetilde O\!\left(
 nha_j\cdot\frac{J^2}{\varepsilon^2a_j^2}
\right)
=\widetilde O\!\left(\frac{nh}{\varepsilon^2a_j}\right).
\]
Since $\sum_j1/a_j<2$, the total has the claimed order.  The sum of the
estimation errors after division by $a_j$, plus the tail in~\eqref{eq:dyadic},
is below $\varepsilon/2$ after adjusting constants.
\end{proof}

\begin{proof}[Proof of Theorem~\ref{thm:c-supplied}]
Sample a window and a uniform potential edge.  The corresponding Bernoulli
mean is $\bar e/M$, and $\bar e\le K+n$.  Bernstein's inequality gives
$\widetilde O((K+n)/\varepsilon^2)$ samples for additive edge-count error
$\varepsilon n/4$; the usual additive Bernstein term is absorbed because
$K+n\ge n$ and $\varepsilon\le1$.  Each sample costs at most $m$ snapshot
queries.  Proposition~\ref{prop:c-dyadic} handles components.  Alternatively,
read all $O(n^2m)$ bits.  Take the cheaper branch.
\end{proof}

\subsection{Full proof of quantum component estimation}

\begin{proposition}[Coherent dyadic component estimator]
\label{prop:q-dyadic-full}
Given coherent adjacency-bit access to an $n$-vertex graph $H$, one can
estimate $c(H)$ to additive error $\varepsilon n/2$ with failure at most
$\delta$ using $\widetilde O(\sqrt n/\varepsilon)$ adjacency queries, for
$1/n\le\varepsilon\le1/4$.
\end{proposition}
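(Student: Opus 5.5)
The plan is to prove Proposition~\ref{prop:q-dyadic-full} by a level-by-level analysis of the dyadic decomposition~\eqref{eq:dyadic}.

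\textbf{Reduction to $J$ estimates.} With $J=\lceil\log_2(16/\varepsilon)\rceil$, the tail satisfies $0\le R\le\varepsilon/16$. Hence it suffices to estimate each $\mu_j=\E_v Z_j(v)\in[0,1]$ to additive error $\alpha_j=\varepsilon a_j/(32J)$, each with failure probability $\delta/(2J)$. Dividing by $a_j$ and summing, the estimation error is $\sum_j \alpha_j/a_j=\varepsilon/32$. Adding the tail gives total error at most $\varepsilon/8$ in $c(H)/n$, which is well below $\varepsilon n/2$ after multiplying by $n$. The union bound over the $J$ levels controls the total failure probability.

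\textbf{The level-$j$ sampler.} I will build a reversible circuit $\mathcal A_j$ that prepares a uniform superposition over $v\in[n]$ and coherently computes $Z_j(v)$ into a register, then rotates an ancilla by angle $\arcsin\sqrt{Z_j(v)}$. The marked amplitude squared is then $\mu_j$.

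The computation of $Z_j(v)$ is a truncated exploration: maintain an explicit list of discovered vertices, at most $2a_j$ of them. Repeatedly take the next unprocessed vertex $u$ in the list and use BBHT search over $w\in[n]$ for a neighbor of $u$ not already in the list. Each discovery costs $\widetilde O(\sqrt{n})$ queries, and so does certifying that $u$ has no new neighbor. Halt once $2a_j$ vertices are found or the queue empties. At most $2a_j$ vertices are processed and at most $2a_j$ discoveries are made, so the cost is $\widetilde O(a_j\sqrt n)$ queries.

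A finer accounting is $\widetilde O(\sqrt{n\cdot\deg})$ per vertex, but the crude bound suffices. The output is $s_H(v)$ exactly when $s_H(v)<2a_j$, and otherwise it is the truncation flag; in either case $Z_j(v)$ is determined.

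\textbf{Main obstacle: coherence.} Each search is run with error $\eta$ per call, amplified by $O(\log(1/\eta))$ repetitions with a majority vote. All garbage, including the discovered list, is uncomputed, so $\mathcal A_j$ is a clean unitary up to operator-norm distance $O(a_j\eta)$ from an ideal circuit. This is the standard hybrid argument. Choosing $\eta=\mathrm{poly}(\varepsilon\delta/n)$ costs only a logarithmic factor, and it makes the deviation of the actual output expectation from $\mu_j$ at most $\alpha_j/4$ over all uses of $\mathcal A_j$. The deterministic ordering of the BBHT outputs is not needed, because the final value $Z_j(v)$ is a function of the connected component only. Uncomputation works by running the same circuit in reverse.

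\textbf{Amplitude estimation and total cost.} Apply BHMT amplitude estimation with $O(1/\alpha_j)$ applications of $\mathcal A_j$, and take a median over $O(\log(J/\delta))$ runs. The level cost is
\[
\widetilde O\!\left(\frac{a_j\sqrt n\,J}{\varepsilon a_j}\right)=\widetilde O\!\left(\frac{\sqrt n}{\varepsilon}\right).
\]
The $a_j$ factors cancel, as intended. Summing over the $J=O(\log(1/\varepsilon))$ levels gives $\widetilde O(\sqrt n/\varepsilon)$ in total.

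The hypothesis $\varepsilon\ge1/n$ guarantees $a_j\le 32/\varepsilon\le 32n$. Thus explorations never exceed what the graph can supply, and the list fits in $O(\varepsilon^{-1}\log n)$ qubits.
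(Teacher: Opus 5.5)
Your proposal is correct and follows the paper's proof essentially step for step. Both use the same dyadic levels with $\alpha_j=\varepsilon a_j/(32J)$, a padded truncated exploration with an explicit discovered list and quantum search for undiscovered neighbours at cost $\widetilde O(a_j\sqrt n)$, and amplitude estimation with $O(\alpha_j^{-1}\log(J/\delta))$ sampler calls, so the $a_j$ cancel. The one difference is in how coherence is handled: the paper amplitude-estimates the actual reversible sampler and lets failed internal searches enter only as an expectation bias of at most $\alpha_j/2$, which avoids your clean-uncomputation and hybrid step. Your route is still valid, provided ``majority vote'' is read as ``verify each returned candidate and keep a verified one.''
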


\begin{proof}
Use the decomposition~\eqref{eq:dyadic}.  For level $j$, explore the component
of a uniform vertex, halting as soon as $L_j=\min\{n,2a_j\}$ vertices have been
discovered: a $2a_j$-th discovery certifies $s\ge2a_j$, whereas an exhausted
queue below the cap determines $s$ exactly.  Process discovered vertices
in queue order.  To process $u$, repeatedly use bounded-error quantum search
over $w\in[n]$ for an edge $uw$ whose second endpoint is not in the explicit
discovered list.  Verify returned candidates.  There are at most $L_j$
successful discoveries and $L_j$ terminating failed searches, so a padded
fixed circuit uses $\widetilde O(a_j\sqrt n)$ adjacency queries.

If the exploration finishes with size $s\in[a_j,2a_j)$, output a reversible
coin with mean $a_j/s$; otherwise output zero.  The ideal mean is $\mu_j$.
Choose
\[
\alpha_j=\frac{\varepsilon a_j}{32J}.
\]
Set the complete bounded-exploration and arithmetic expectation bias to at
most $\alpha_j/2$, and amplitude-estimate the actual circuit's output mean to
additive error $\alpha_j/2$, with failure at most $\delta/(2J)$.  This takes
$O(\alpha_j^{-1}\log(J/\delta))$ calls to the sampler and its inverse.
The query cost at level $j$ is therefore
\[
\widetilde O\!\left(
 a_j\sqrt n\cdot\frac{J}{\varepsilon a_j}
\right)=\widetilde O(\sqrt n/\varepsilon).
\]
There are $J=O(\log(1/\varepsilon))$ levels, absorbed by the tilde.
On the joint success event,
\[
\left|\sum_j\frac{\widehat\mu_j-\mu_j}{a_j}\right|
\le\sum_j\frac{\alpha_j}{a_j}\le\varepsilon/32,
\]
and the tail is at most $\varepsilon/16$.  Adjusting the fixed constants leaves
error below $\varepsilon/2$.  The circuit estimates its actual reversible
sampler, and failed internal searches enter only as expectation bias.
\end{proof}

For a superposition of windows of length at most $h$, replace each ideal graph
edge query by a coherently amplified search for a missing snapshot.  Compile
its clean compute/copy/uncompute implementation to operator error
$O(\delta/Q_{\max})$, where $Q_{\max}$ is the deterministic ideal-query cap.
A hybrid argument then multiplies the query bound by
$\widetilde O(\sqrt h)$ and changes the total failure by only the allocated
amount.

\subsection{Classical density bootstrap without a supplied bound}

\begin{proposition}[Adaptive classical average-density estimator]
\label{prop:c-adaptive}
For a classically sampleable input-independent window law of maximum length
$h$, let $F=\E_I\beta_1(G_I)$.  There is a randomized estimator with
success at least $1-\delta$, deterministic cap $O(n^2h)$, expected query cost
\[
\widetilde O\!\left(
 h\min\left\{n^2,\frac{F+n}{\varepsilon^2}\right\}
\right),
\]
and the same instance-sensitive bound outside an event of probability
$O(\delta/n^3)$.
\end{proposition}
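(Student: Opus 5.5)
The plan mirrors the quantum bootstrap of Theorem~\ref{thm:q-adaptive}, carried out classically.

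\textbf{Dummy padding.} Append $n$ known marked dummy labels to the potential-edge domain of size $M=\binom n2$. For a sampled window $I\sim\Pi$ and a uniform label from the padded domain, the marked mass is $E'=\bar e+n$. By Lemma~\ref{lem:window-rank}, $F=\bar e-n+\bar c$ with $1\le\bar c\le n$, so
\[
F+n\le E'\le F+2n .
\]
The Bernoulli success probability is $p=E'/(M+n)\ge n/(M+n)=\Omega(1/n)$.

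\textbf{Doubling bootstrap.} Draw samples one at a time. Each sample costs at most $h$ snapshot queries, or zero for a dummy label. Stop once $\Theta(\log(n/\delta))$ successes have been observed.

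By a multiplicative Chernoff bound, with failure $\delta_0=\Theta(\delta/n^3)$, the number of samples is $N=\Theta(\log(n/\delta)/p)$. The estimate $\widehat p=(\text{successes})/N$ then lies within a factor $2$ of $p$. Set $U=2\widehat p(M+n)$ and clip it to $[n,2(M+n)]$, so that on the good event $E'\le U\le 4E'$.

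The expected number of samples is $O(\log(n/\delta)/p)$. In query terms this is $\widetilde O(h\,n^2/E')=\widetilde O(hn^2/(F+n))$ when $E'$ is small. Cap the bootstrap at $\widetilde O(n^2)$ samples, which is within the $\widetilde O(n^2 h)$ budget. This is a concern, since it is not obviously within the target $\widetilde O(h(F+n)/\varepsilon^2)$.

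\textbf{Main obstacle.} The obstacle is that the bootstrap cost $hn^2/(F+n)$ must not exceed the target bound. I would handle it by comparing with the full-read branch. If $(F+n)/\varepsilon^2\ge n^2$, the target is $hn^2$, and the bootstrap cost $hn^2/(F+n)\le hn^2$ is fine. Otherwise $F+n<\varepsilon^2n^2$. Using $\varepsilon\ge1/n$ and $F+n\ge n$, I need
\[
\frac{n^2}{F+n}\le \frac{F+n}{\varepsilon^2},
\qquad\text{that is,}\qquad (F+n)^2\ge\varepsilon^2n^2 .
\]
Since $F+n\ge n\ge\varepsilon n$ for $\varepsilon\le1$, this holds. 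So the bootstrap is always within $\widetilde O(h\min\{n^2,(F+n)/\varepsilon^2\})$. I expect this case comparison, together with the accounting for the random sample count, to be the delicate step.

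\textbf{Final stage.} With $U$ fixed, run the branch of Theorem~\ref{thm:c-supplied} with $K+n$ replaced by $U$. This means Bernstein sampling of $\bar e$ with variance proxy $U$ using $\widetilde O(U/\varepsilon^2)$ samples, each costing at most $h$ queries. Alternatively, if $U/\varepsilon^2\ge n^2$, read all $O(n^2h)$ bits of the relevant windows; since the window law is input-independent, reading the $n^2$ pairs over all $m$ snapshots suffices. In addition, run Proposition~\ref{prop:c-dyadic} for the component term at cost $\widetilde O(nh/\varepsilon^2)$, which is absorbed because $U\ge n$, and clip it by the dense cap. Output $\widehat e-n+\widehat c$.

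\textbf{Accounting.} Outside the bootstrap failure event of probability $\delta_0=O(\delta/n^3)$, the total cost is
\[
\widetilde O\bigl(h\min\{n^2,(F+n)/\varepsilon^2\}\bigr).
\]
On every branch the deterministic cap $O(n^2h)$ holds, up to logarithmic factors absorbed by making the sample cap explicit. The bad event contributes at most $\delta_0\cdot O(n^2h)$ to the expectation, which is negligible against $hn$. The final-stage failure probability is set to $\delta/2$, and a union bound with $\delta_0$ gives overall success at least $1-\delta$.
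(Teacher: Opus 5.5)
Your proposal is correct and follows the paper's proof. The shared steps are $n$ marked dummies, a constant-factor bootstrap of $E'$ with failure $\Theta(\delta/n^3)$ and a clipped bound $U$, then Bernstein edge sampling budgeted from $U$ plus Proposition~\ref{prop:c-dyadic} under a full-read cap, with the bad event absorbed in expectation. The only difference is your stop-after-$k$-successes rule: the paper takes a fixed $O(n\log(n/\delta))$ samples, justified by $E'/(M+n)=\Omega(1/n)$. That makes your ``main obstacle'' immediate, since the bootstrap costs $\widetilde O(nh)\le\widetilde O\bigl(h\min\{n^2,(F+n)/\varepsilon^2\}\bigr)$ because $F+n\ge n$ and $\varepsilon\le1$, and it removes the need for your extra $\widetilde O(n^2)$-sample cap.
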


\begin{proof}
As in Theorem~\ref{thm:q-adaptive}, augment the $M=\binom n2$ potential-edge
labels by $n$ known marked dummies.  A uniform window/index sample has marked
probability $E'/(M+n)$, where $E'=\bar e+n$ and
$F+n\le E'\le F+2n$.  Since $E'/(M+n)=\Omega(1/n)$, standard Bernoulli
sampling gives a constant-factor estimate of $E'$ using
$O(n\log(n/\delta))$ samples.  Reduce the bootstrap failure to
$\delta_0=\Theta(\delta/n^3)$ and form the same clipped constant-factor upper
bound $U$.

On the good event, Bernstein edge estimation uses
$\widetilde O(U/\varepsilon^2)$ samples, and
Proposition~\ref{prop:c-dyadic} uses
$\widetilde O(n/\varepsilon^2)$ additional ideal intersection queries.  Every
sample costs at most $h$ snapshot queries.  Cap the procedure by reading all
$O(n^2h)$ relevant bits.  The bad-event contribution to expectation is
$O(\delta_0n^2h)$, absorbed by the good-event bound because $F+n\ge n$ and
$\varepsilon\le1$.  The same good bootstrap event yields the stated
high-probability cost.
\end{proof}

\section{Proofs for Section~\ref{sec:dense}}\label{app:dense}

\subsection{Metric-output adversary lemma}

\begin{proposition}[Biregular adversary for real-valued approximation]
\label{prop:metric-adversary}
Let $X,Y$ be finite sets of bit strings and let
$\mathcal R\subseteq X\times Y$ be biregular, with degrees $d_X,d_Y$.
For each query position $i$, suppose every $x\in X$ has at most $s_X$
related $y$ differing at $i$, and every $y\in Y$ has at most $s_Y$ related
$x$ differing at $i$.  Let $A_x$ denote the set of classical outputs accepted
as correct on input $x$.  If $A_x\cap A_y=\varnothing$ for every
$(x,y)\in\mathcal R$, then every quantum algorithm that outputs an element of
$A_x$ with probability at least $2/3$ on every input uses
\[
\Omega\!\left(\sqrt{d_Xd_Y/(s_Xs_Y)}\right)
\]
queries.  The unions $\bigcup_{x\in X}A_x$ and $\bigcup_{y\in Y}A_y$ may
overlap.
\end{proposition}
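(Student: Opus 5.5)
This is Ambainis's relational adversary method with the Boolean-function hypothesis replaced by the weaker disjoint-acceptance condition. The plan is to rerun the standard weight-function (progress-measure) argument and check that the final step uses only that $A_x\cap A_y=\varnothing$ for related pairs.

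\textbf{Setup.} Fix an algorithm making $T$ queries. Let $|\psi_x^t\rangle$ denote its state after $t$ queries on input $x$. Define the progress measure
\[
W^t=\sum_{(x,y)\in\mathcal R}\bigl|\langle\psi_x^t|\psi_y^t\rangle\bigr|.
\]
At the start all states coincide, so $W^0=|\mathcal R|=|X|d_X=|Y|d_Y$.

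\textbf{Step 1: the final measure is small.} At the end, let $\Pi_x$ be the projector onto outcomes in $A_x$. For related $x,y$ the projectors $\Pi_x$ and $\Pi_y$ are orthogonal, because they are diagonal in the computational output basis and $A_x\cap A_y=\varnothing$. Success probability $2/3$ gives $\|\Pi_x\psi_x\|^2\ge2/3$ and $\|\Pi_y\psi_y\|^2\ge2/3$.

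Since $\Pi_y\perp\Pi_x$, we have $\|(I-\Pi_x)\psi_y\|^2\ge2/3$. Expanding the inner product in the decomposition $I=\Pi_x+(I-\Pi_x)$ and applying Cauchy--Schwarz to each piece yields
\[
|\langle\psi_x|\psi_y\rangle|\le 2\sqrt{(1/3)(2/3)}\cdot\ldots\le c<1
\]
for an absolute constant $c$, for instance $2\sqrt2/3$. Hence $W^T\le c\,W^0$.

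This is the only place the output structure enters. It is also the point of the final remark in the statement: overlap of $A_x$ with $A_{y'}$ for unrelated pairs is harmless.

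\textbf{Step 2: each query changes the measure little.} Non-query unitaries are input-independent, so they preserve every inner product. For a single oracle call, write $\psi_x=\sum_i\alpha_{x,i}$, where $\alpha_{x,i}$ is the component querying position $i$. Then
\[
|\langle\psi_x^{t+1}|\psi_y^{t+1}\rangle-\langle\psi_x^t|\psi_y^t\rangle|\le 2\sum_{i:\,x_i\ne y_i}\|\alpha_{x,i}\|\,\|\alpha_{y,i}\|.
\]
Apply $2ab\le\gamma a^2+b^2/\gamma$ with $\gamma=\sqrt{s_Y/s_X}$, and sum over $(x,y)\in\mathcal R$ and $i$.

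For each $x$ and each $i$, at most $s_X$ related $y$ differ at $i$. Together with $\sum_i\|\alpha_{x,i}\|^2=1$, the $x$-terms contribute at most $\gamma s_X|X|$. Symmetrically, the $y$-terms contribute at most $s_Y|Y|/\gamma$. The per-query loss is therefore
\[
O\bigl(\sqrt{s_Xs_Y}\cdot\sqrt{|X||Y|}\bigr)
\]
after balancing. The choice of $\gamma$ here is the step needing care: $|X|$ and $|Y|$ may differ, so I will instead normalize by $|\mathcal R|=|X|d_X=|Y|d_Y$, choosing $\gamma=\sqrt{s_Y d_X/(s_X d_Y)}$. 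This gives a loss of at most $|\mathcal R|\sqrt{s_Xs_Y/(d_Xd_Y)}$ per query.

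\textbf{Conclusion.} Combining the two steps,
\[
(1-c)|\mathcal R|\le T\,|\mathcal R|\sqrt{s_Xs_Y/(d_Xd_Y)},
\]
which gives the claimed $T=\Omega\bigl(\sqrt{d_Xd_Y/(s_Xs_Y)}\bigr)$.

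Biregularity is used exactly to make these two normalizations agree. I expect the only real obstacle to be the bookkeeping in Step 2.
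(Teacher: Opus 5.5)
Your argument is correct and is essentially the paper's. The paper uses the same progress measure, normalized by the uniform unit vectors on $X$ and $Y$ so that $W_0=\|\Gamma\|=\sqrt{d_Xd_Y}$. It gets the same $2\sqrt2/3$ bound on related final states from $A_x\cap A_y=\varnothing$, and bounds the per-query change through $\|\Gamma_i\|\le\sqrt{s_Xs_Y}$ (row/column-sum bound), which is what your balanced AM--GM with $\gamma=\sqrt{s_Yd_X/(s_Xd_Y)}$ computes; the only slip is a dropped factor $2$ in your per-query loss (it is $2|\mathcal R|\sqrt{s_Xs_Y/(d_Xd_Y)}$), which does not affect the $\Omega$ bound.
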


\begin{proof}
Let $\Gamma$ be the rectangular adjacency matrix of $\mathcal R$, and let
$u,v$ be the uniform unit vectors on $X,Y$.  Biregularity gives
$u^T\Gamma v=\|\Gamma\|=\sqrt{d_Xd_Y}$: the displayed singular value follows
from the uniform vectors, while the row-sum/column-sum norm bound gives the
reverse inequality.  Purify the algorithm and define after $k$ queries
\[
W_k=\sum_{x,y}\Gamma_{xy}u_xv_y
\langle\psi_x^k\mid\psi_y^k\rangle.
\]
Initially $W_0=\|\Gamma\|$.  Input-independent unitaries do not alter it.

Fix a related pair.  Let $P_x$ be the final measurement event that the output
lies in $A_x$.  Its probability is at least $2/3$ on $x$ and at most $1/3$ on
$y$, because $A_x\cap A_y=\varnothing$ and the algorithm is correct on $y$.
Splitting the inner product across $P_x$ and its complement gives
\[
|\langle\psi_x^Q\mid\psi_y^Q\rangle|
\le\sqrt{(2/3)(1/3)}+\sqrt{(1/3)(2/3)}=2\sqrt2/3.
\]
Therefore $|W_Q|\le(2\sqrt2/3)\|\Gamma\|$.

Let $\Gamma_i$ retain the related pairs that differ at query position $i$.
If $\alpha_{x,i}$ is the norm of the query-index-$i$ component immediately
before a query, a bit query changes the inner product of a pair by at most
$2\sum_{i:x_i\ne y_i}\alpha_{x,i}\alpha_{y,i}$.  Put
$(a_i)_x=u_x\alpha_{x,i}$ and $(b_i)_y=v_y\alpha_{y,i}$.  Then
\[
|W_{k+1}-W_k|
\le2\sum_i a_i^T\Gamma_i b_i
\le2\max_i\|\Gamma_i\|
\sqrt{\sum_i\|a_i\|^2\sum_i\|b_i\|^2}.
\]
Both sums of squared norms equal one.  The row/column degree bounds give
$\|\Gamma_i\|\le\sqrt{s_Xs_Y}$.  A constant fraction of
$\sqrt{d_Xd_Y}$ must therefore be removed at rate at most
$2\sqrt{s_Xs_Y}$ per query, proving the claim.
\end{proof}

\subsection{Proof of Theorem~\ref{thm:dense}}

\begin{proof}[Proof of Theorem~\ref{thm:dense}]
The upper bounds follow from Theorems~\ref{thm:q-supplied} and
\ref{thm:c-supplied} with $K=B_n=\Theta(n^2)$; the classical bound may also
read the complete input.

For the lower bounds, keep a spanning star at every time and associate one
optional leaf-to-leaf edge with each of
$B_n=\binom{n-1}{2}$ temporal rows.  A row is either all one or has one zero at
a time
\[
\mathcal T_m=\{\lceil m/3\rceil,\ldots,\lfloor2m/3\rfloor\},
\qquad L=|\mathcal T_m|\ge m/4.
\]
The star plus one optional edge is a fundamental cycle independent of all
other optional edges.  Hence an all-one row gives one bar of lifetime $m$,
whereas a zero at time $t$ gives two bars of lifetimes $t-1$ and $m-t$.
For $r\ge2$ its $S_r$ contribution is at most $5/9$; for $A$ the lost mass is
at least $4/9$.  We use the common loss lower bound $1/3$.

For the quantum bound, let $a=\lfloor B_n/2\rfloor$,
$d=\lfloor6\varepsilon n\rfloor+1$, and $b=a+d$.  Relate inputs with $a$
all-one rows to inputs with $b$ all-one rows by erasing the zeros in $d$ rows.
Every related output pair differs by more than $2\varepsilon n$.  The relation
degrees are
\[
d_X=\binom{B_n-a}{d},\quad d_Y=\binom bdL^d,
\]
and a fixed query position changes at most
\[
s_X=\binom{B_n-a-1}{d-1},\quad
s_Y=\binom{b-1}{d-1}L^{d-1}
\]
related neighbors.  The positive adversary progress bound for pairwise
disjoint acceptable output intervals gives
\[
\Omega\!\left(\sqrt{\frac{d_Xd_Y}{s_Xs_Y}}\right)
=\Omega\!\left(\frac{\sqrt{(B_n-a)bL}}d\right)
=\Omega(n\sqrt m/\varepsilon).
\]
The event defining the acceptable output interval may depend on the related
pair; a global Boolean threshold separating all $X$ and $Y$ values is not
needed.

For the classical bound, independently make each row all one with probability
$p_\pm=1/2\pm48/n$, and otherwise place its unique zero uniformly in
$\mathcal T_m$.  Let $w(t)$ be the split-row contribution and
$\mu=\E_t[1-w(t)]\ge1/3$.  The two statistic means lie on opposite sides of
$B_n(1-\mu/2)$ at distance at least $4n$, while the total variance is at most
$B_n/4\le n^2/8$.  A pointwise-correct estimator, after constant median
amplification, therefore distinguishes the two distributions with constant
advantage.

After $q$ distinct one-answers in one unresolved row, the probability that a
fresh central query returns zero is
\[
z_p(q)=\frac{1-p}{L-(1-p)q}.
\]
For $p\in[1/4,3/4]$, one has $z_p\ge1/(4L)$,
$1-z_p\ge1/4$, and $|\partial z_p/\partial p|\le16/L$.  The conditional
Bernoulli KL divergence of one informative query is therefore
$O(1/(n^2L))$.  Queries after discovering the zero, repeated queries, and
fixed bits contribute zero.  The transcript chain rule and Pinsker's
inequality imply $\Omega(n^2L)=\Omega(n^2m)$ queries.
\end{proof}

\section{Proofs for Section~\ref{sec:sparse}}\label{app:sparse}

\begin{proof}[Quantum upper bound of Theorem~\ref{thm:low-density}]
Theorem~\ref{thm:q-supplied} gives
$\widetilde O(\sqrt{m(K+n)}/\varepsilon)$ queries.  Since $K\ge256n$, this is
$\widetilde O(\sqrt{mK}/\varepsilon)$.
\end{proof}

\begin{proof}[Quantum lower bound of Theorem~\ref{thm:low-density}]
Put $B=B_n$.  Use the star construction from the proof of
Theorem~\ref{thm:dense}.  Every optional temporal row is either all one or has
one zero at a central time.  Set
\[
d=\lceil4\varepsilon n\rceil,
\qquad b=\lfloor K/4\rfloor,
\qquad a=b-d.
\]
The parameter assumptions imply $a\ge1$, $b\le B/32$, and $d\le b$.
Let $X$ contain inputs with exactly $a$ all-one rows and $Y$ inputs with
exactly $b$ all-one rows.  Relate $x\in X$ to $y\in Y$ when $y$ is obtained by
erasing the unique zero in exactly $d$ split rows.

All related inputs satisfy the promise.  Indeed, an input in $Y$ has statistic
at most
\[
b+\lambda_{\phi,m}(B-b)
\le K/4+\varepsilon n/64<K.
\]
Moreover,~\eqref{eq:leakage-promise} and $B\ge n$ imply
$\lambda_{\phi,m}\le1/64$.  Hence every related pair satisfies
\[
F_\phi(y)-F_\phi(x)
\ge d(1-\lambda_{\phi,m})
>2\varepsilon n.
\]
Their additive-error output intervals are disjoint.

The relation parameters are
\[
d_X=\binom{B-a}{d},\quad d_Y=\binom bdL^d,
\quad s_X=\binom{B-a-1}{d-1},\quad
s_Y=\binom{b-1}{d-1}L^{d-1}.
\]
The metric-output adversary argument used in
Theorem~\ref{thm:dense} therefore gives
\[
\Omega\!\left(
\frac{\sqrt{(B-a)bL}}d
\right)
=\Omega\!\left(\frac{\sqrt{mK}}\varepsilon\right),
\]
because $B-a=\Theta(n^2)$, $b=\Theta(K)$,
$L=\Theta(m)$, and $d=\Theta(\varepsilon n)$.
\end{proof}

\begin{proposition}[Adaptive censored-row information bound]
\label{prop:censored-kl}
Fix $L\ge4$.  A temporal row is all one with probability $p$ and otherwise
contains one zero at a uniformly random position in $[L]$.  Rows are
independent.  Let $0<p_-<p_+\le1/2$, put $\Delta=p_+-p_-$, and assume
$\Delta\le p_-/2$.  For every possibly randomized adaptive algorithm making
at most $Q$ bit queries, if $\mathsf T_+$ and $\mathsf T_-$ are its complete
transcripts under the two product distributions, then
\[
D_{\mathrm{KL}}(\mathsf T_+\|\mathsf T_-)
\le128Q\frac{\Delta^2}{p_-L}.
\]
\end{proposition}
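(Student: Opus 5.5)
The plan is to bound the divergence of an \emph{augmented} transcript and then apply the data-processing inequality. The augmented transcript $\widetilde{\mathsf T}$ records the algorithm's queries and answers. In addition, the analysis (not the algorithm) is told a row's type (all one, or the position of its zero) the moment that row has returned $\lfloor L/2\rfloor$ distinct one-answers. The real transcript is a function of $\widetilde{\mathsf T}$, and the algorithm's future behaviour depends only on its own view. Hence $D_{\mathrm{KL}}(\mathsf T_+\|\mathsf T_-)\le D_{\mathrm{KL}}(\widetilde{\mathsf T}_+\|\widetilde{\mathsf T}_-)$. The algorithm's internal randomness is input-independent, so it contributes nothing and we may fix it. We then expand $D_{\mathrm{KL}}(\widetilde{\mathsf T}_+\|\widetilde{\mathsf T}_-)$ by the chain rule into a sum, over augmented transcript events, of expected conditional Bernoulli (or categorical) divergences. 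The expectations are taken under $\mathsf T_+$.

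\textbf{Classifying the events.} By independence of rows, conditioning on the past affects only the queried row. It suffices to track, for that row, the number $q$ of distinct one-answers so far, whether its zero has been found, and whether it has been censored. Repeated queries, queries in a row whose zero was found, and queries in a censored row have deterministic answers under both laws, so they contribute zero. A fresh query in an unresolved, uncensored row has $q<L/2$. A posterior computation shows that it returns zero with probability
\[
z_p(q)=\frac{1-p}{L-(1-p)q}.
\]
Since $q\le L/2$ and $p\le1/2$, we get $z_p\ge1/(2L)$ and $1-z_p\ge1/2$, and also $|\partial z_p/\partial p|=L/(L-(1-p)q)^2\le4/L$. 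Using $D_{\mathrm{KL}}(\mathrm{Ber}(x)\|\mathrm{Ber}(y))\le(x-y)^2/(y(1-y))$, each such query contributes at most $(4\Delta/L)^2/(1/(4L))=64\Delta^2/L$. This is at most $64\Delta^2/(p_-L)$ because $p_-\le1$.

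\textbf{The censoring revelations.} When a row is censored after $\lfloor L/2\rfloor$ ones, the revealed type is all one with posterior probability
\[
\pi_p=\frac{p}{p+(1-p)(L-\lfloor L/2\rfloor)/L}.
\]
Otherwise the zero is uniform on the unqueried positions, and that uniform part is identical under both laws. So the revelation costs the Bernoulli divergence between $\pi_{p_+}$ and $\pi_{p_-}$. Here $\pi_p=\Theta(p)$ with $|\partial\pi_p/\partial p|=O(1)$, and the hypothesis $\Delta\le p_-/2$ keeps $\pi_{p_-}$ and $\pi_{p_+}$ comparable and bounded away from $1$. Each revelation therefore costs $O(\Delta^2/p_-)$. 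A row is censored only after $\lfloor L/2\rfloor\ge L/4$ queries in it ($L\ge4$), so at most $4Q/L$ revelations occur. Their total is $O(Q\Delta^2/(p_-L))$.

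Adding the two contributions and tracking constants gives the bound $128Q\Delta^2/(p_-L)$. I expect the main obstacle to be the revelation step: getting an explicit constant for the divergence between $\pi_{p_+}$ and $\pi_{p_-}$ that fits inside the budget, and confirming that the chain-rule bookkeeping is legitimate when censoring is triggered adaptively. To handle the first, I would bound $\pi_{p_-}\ge p_-$, $1-\pi_{p_-}\ge1/3$ and $|\pi_{p_+}-\pi_{p_-}|\le4\Delta$ directly. To handle the second, I would treat the revelation as an extra symbol appended to the transcript at a stopping time determined by the past.
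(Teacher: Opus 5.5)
Your proposal follows the paper's proof essentially step for step. The shared steps are the analysis-only reveal after $\lfloor L/2\rfloor$ one-answers, data processing, and the bound $64\Delta^2/L$ per informative answer obtained from $z_p(q)$. The reveal is likewise reduced to a Bernoulli divergence between the posteriors of the all-one type, and reveals are counted as at most $4Q/L$.

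The one thing to fix is the constant, since the statement asserts $128$ explicitly. With your planned bounds $|\pi_{p_+}-\pi_{p_-}|\le4\Delta$ and $\pi_{p_-}(1-\pi_{p_-})\ge p_-/3$, each reveal costs up to $48\Delta^2/p_-$. The reveals then total $192Q\Delta^2/(p_-L)$, and adding the ordinary-query term gives $256$, not $128$. Use the sharper derivative bound $\pi_p'=c/(c+(1-c)p)^2\le1/c\le2$, where $c=(L-\lfloor L/2\rfloor)/L\ge1/2$. This gives $|\pi_{p_+}-\pi_{p_-}|\le2\Delta$, so each reveal costs at most $12\Delta^2/p_-$; the paper, using the weaker $1-\pi_{p_-}\ge1/4$, gets $16\Delta^2/p_-$. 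Either way the reveal total stays within $64Q\Delta^2/(p_-L)$ and the sum within $128$.

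Also, the hypothesis $\Delta\le p_-/2$ is not needed at this step. The bound $(u-v)^2/[v(1-v)]$ only uses lower bounds on $\pi_{p_-}$ and $1-\pi_{p_-}$, so there is no need to keep $\pi_{p_+}$ comparable to $\pi_{p_-}$.
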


\begin{proof}[Proof of Proposition~\ref{prop:censored-kl}]
Include the algorithm's random coins in the transcript.  Conditional on them,
the query policy is deterministic, so it suffices to prove the claim for a
deterministic policy and then average.

Augment the recorded transcript, without revealing any extra information to
the algorithm, as follows.  When the original query process has accumulated
$h=\lfloor L/2\rfloor$ distinct one-answers in a row, append the latent row
type and, for a nonfull row, its zero location to the analysis transcript.  The
algorithm continues to follow its original policy and does not see this extra
record.  The original transcript is the projection of the augmented one, so
data processing permits us to upper-bound its KL divergence by that of the
augmented transcript.

Before a reveal, suppose $q<h$ distinct queried positions in a row have all
returned one.  A fresh query is zero with conditional probability
\[
z_p(q)=\frac{1-p}{L-(1-p)q}.
\]
Since $q<L/2$ and $p\le1/2$,
\[
z_p(q)\ge\frac1{2L},\qquad z_p(q)\le\frac2L\le\frac12,
\qquad
\left|\frac{\partial z_p(q)}{\partial p}\right|
=\frac{L}{[L-(1-p)q]^2}\le\frac4L.
\]
Using
$D_{\mathrm{KL}}(\operatorname{Ber}(u)\|\operatorname{Ber}(v))
\le(u-v)^2/[v(1-v)]$, each ordinary informative answer contributes at most
$64\Delta^2/L$, which is at most
$64\Delta^2/(p_-L)$.

Now consider a reveal.  Put $c=(L-h)/L\ge1/2$.  Conditional on $h$ one-answers,
the posterior probability of an all-one row is
\[
a_p=\frac{p}{p+(1-p)c}
=\frac{p}{c+(1-c)p}.
\]
The zero location conditional on a nonfull row is uniform among the unqueried
positions under both hypotheses, so the reveal KL is exactly the KL between
$\operatorname{Ber}(a_{p_+})$ and $\operatorname{Ber}(a_{p_-})$.
Since $|a'_p|\le2$, $a_{p_-}\ge p_-$, and
$1-a_{p_-}\ge1/4$, this is at most $16\Delta^2/p_-$.  Every reveal consumes at
least $h\ge L/4$ previous queries, so at most $4Q/L$ reveals occur.  Their
total KL is at most $64Q\Delta^2/(p_-L)$.

After a zero or a reveal, the row is resolved and subsequent queries add no
information.  The transcript chain rule adds the ordinary-query and reveal
contributions, giving the stated constant 128.
\end{proof}

\begin{proof}[Randomized classical lower bound for
Theorem~\ref{thm:low-density}]
Let $B=B_n$, $\lambda=\lambda_{\phi,m}$, and let $w(t)$ be the split-row
contribution.  Define product input distributions $\mathcal D_-$ and
$\mathcal D_+$: independently for each of the $B$ optional rows, make it all
one with probability
\[
p_- =\frac{K/4}{B},
\qquad
p_+=\frac{K/4+D}{B},
\qquad
D=16(\varepsilon n+\sqrt K),
\]
and otherwise put one zero at a uniform central time.  The numerical
assumptions imply
\[
D\le K/64,
\qquad p_+\le1/2,
\qquad p_+-p_-\le p_-/2.
\]
Let $Y_j\in[0,1]$ be the contribution of row $j$, let
$F=\sum_jY_j$, and put $\bar w=\E_tw(t)\le\lambda$.  The means are
\[
\mu_\pm=B[\bar w+(1-\bar w)p_\pm],
\]
so
\[
\mu_+-\mu_-=(1-\bar w)D\ge(63/64)D.
\]
Moreover,
\[
\operatorname{Var}(F)
\le B(p_++\lambda^2)<K/3.
\]
The last inequality uses $Bp_+=K/4+D\le17K/64$ and
$B\lambda^2\le(\varepsilon n/64)^2/B\le1/65536$.

Let
\[
\mathcal G_\pm=\{|F-\mu_\pm|\le4\sqrt K\}.
\]
Chebyshev gives $\Prb_{\mathcal D_\pm}(\mathcal G_\pm^c)<1/32$.
Every input in either good event satisfies the promise: using
$4\sqrt K\le K/512$,
\[
F\le\mu_++4\sqrt K
\le17K/64+\varepsilon n/64+K/512<K.
\]
Let $\tau=(\mu_++\mu_-)/2$.  On the two good events, the distance from $F$ to
$\tau$ is at least
\[
\frac{\mu_+-\mu_-}{2}-4\sqrt K
>\varepsilon n.
\]
Thus thresholding any additive-$\varepsilon n$ estimate distinguishes the two
conditioned, promise-supported distributions with error at most $1/3$.
Their output-transcript total variation is at least $1/3$.

Let $\mathsf T_\pm$ be the unconditioned transcripts.  Conditioning changes
each input distribution, and hence each transcript distribution, by total
variation at most $1/32$.  Therefore
\[
d_{\mathrm{TV}}(\mathsf T_+,\mathsf T_-)
\ge\frac13-\frac1{16}>\frac14.
\]
Pinsker gives $D_{\mathrm{KL}}(\mathsf T_+\|\mathsf T_-)\ge1/8$.
Proposition~\ref{prop:censored-kl}, with
$\Delta=D/B$, now implies
\[
Q=\Omega\!\left(\frac{p_-L}{\Delta^2}\right)
=\Omega\!\left(\frac{KBL}{D^2}\right)
=\Omega\!\left(
\frac{mKn^2}{K+\varepsilon^2n^2}
\right).
\]
For positive $x,y$, $xy/(x+y)\ge\frac12\min\{x,y\}$; hence
\[
Q=\Omega\!\left(
 m\min\left\{n^2,\frac K{\varepsilon^2}\right\}
\right).
\]
This gives a direct randomized information lower bound.  Equivalently, the equal
mixture of the two conditioned distributions is a valid Yao hard distribution:
pointwise correctness is used only on their promise-supported good events, while
the information bound is applied to the original independent product
distributions.
\end{proof}

\begin{proof}[Randomized classical upper bound for
Theorem~\ref{thm:low-density}]
Theorem~\ref{thm:c-supplied} and $K\ge256n$ give
\[
\widetilde O\!\left(
 m\min\{n^2,K/\varepsilon^2\}
\right).
\]
Together with the lower bound, this completes the proof.
\end{proof}

\section{Proofs for Sections~\ref{sec:weights} and~\ref{sec:histogram}}\label{app:weights}

\begin{corollary}[Logarithmic power weights]
For $\phi_r(x)=x^r$,
\[
\lambda_{\phi_r,m}\le2(2/3)^r.
\]
Consequently the leakage promise~\eqref{eq:leakage-promise} holds whenever
\[
r\ge
\left\lceil\log_{3/2}\frac{128B_n}{\varepsilon n}\right\rceil.
\]
The simpler sufficient choice
$r\ge\lceil\log_{3/2}(64n/\varepsilon)\rceil$ follows from
$B_n\le n^2/2$.
\end{corollary}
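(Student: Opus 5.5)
The plan is to bound the two pieces of the leakage sum separately and uniformly over $t\in\mathcal T_m$, then solve the promise for $r$.

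\textbf{Step 1: a split leaves two short pieces.} For $t\in\mathcal T_m=\{\lceil m/3\rceil,\ldots,\lfloor 2m/3\rfloor\}$ we have $t\ge m/3$, so
\[
\frac{m-t}{m}\le\frac23 .
\]
Likewise $t\le 2m/3$ gives
\[
\frac{t-1}{m}<\frac{t}{m}\le\frac23 .
\]
Both arguments of $\phi$ therefore lie in $[0,2/3]$.

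\textbf{Step 2: bound the leakage.} Since $\phi_r(x)=x^r$ is nondecreasing on $[0,1]$, each term is at most $(2/3)^r$. Taking the maximum over $t$ gives
\[
\lambda_{\phi_r,m}\le 2(2/3)^r .
\]
This holds for every $m\ge3$ and needs no use of the factor $(t-1)$ beyond the monotone bound.

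\textbf{Step 3: sufficient condition for the promise.} It suffices that
\[
2(2/3)^r B_n\le\frac{\varepsilon n}{64},
\qquad\text{equivalently}\qquad
(3/2)^r\ge\frac{128B_n}{\varepsilon n}.
\]
Taking $\log_{3/2}$ and rounding up gives the stated threshold on $r$. For the simpler choice, $B_n=\binom{n-1}{2}\le n^2/2$ gives
\[
\frac{128B_n}{\varepsilon n}\le\frac{64n}{\varepsilon},
\]
so $r\ge\lceil\log_{3/2}(64n/\varepsilon)\rceil$ implies the first condition by monotonicity of $\log_{3/2}$.

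\textbf{Expected difficulty.} There is no real obstacle. The only care needed is the boundary of $\mathcal T_m$: the ceiling and floor must still give normalized lengths at most $2/3$. Steps 1 and 2 check exactly this, using the direction of rounding of $\lceil m/3\rceil$ and $\lfloor 2m/3\rfloor$.
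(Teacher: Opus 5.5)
Your proof is correct and follows the paper's own argument. Both normalized pieces of a central split are at most $2/3$, so monotonicity of $x^r$ gives $\lambda_{\phi_r,m}\le 2(2/3)^r$, and the two degree thresholds follow by rearranging the promise~\eqref{eq:leakage-promise} and using $B_n\le n^2/2$.
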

\begin{proof}[Proof of Corollary~\ref{cor:power}]
For a central split, both normalized piece lengths are at most $2/3$.
Monotonicity of $x^r$ gives the leakage bound; rearrange it to obtain the
stated degree.
\end{proof}

\begin{corollary}[Binomial soft-long-bar weight]
For $\phi_s^{\mathrm{bin}}$,
\[
\lambda_{\phi_s^{\mathrm{bin}},m}
\le\frac43(5/6)^s.
\]
Thus~\eqref{eq:leakage-promise} holds whenever
\[
s\ge
\left\lceil\log_{6/5}\frac{256B_n}{3\varepsilon n}\right\rceil,
\]
and it is sufficient to take
$s\ge\lceil\log_{6/5}(128n/(3\varepsilon))\rceil$.
\end{corollary}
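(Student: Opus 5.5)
Write $x_1=(t-1)/m$ and $x_2=(m-t)/m$ for the normalized lengths of the two pieces after a split at a central time $t\in\mathcal T_m$. The plan is to bound $\phi_s^{\mathrm{bin}}(x_1)+\phi_s^{\mathrm{bin}}(x_2)$ uniformly over $\mathcal T_m$ using the two facts already used for Corollary~\ref{cor:power}: $x_1+x_2=(m-1)/m<1$, and both pieces are at most $2/3$. Since $t\ge m/3$ and $t\le 2m/3$, we have $x_1\le(2m/3-1)/m<2/3$ and $x_2\le(m-m/3)/m=2/3$. The weight $\phi_s^{\mathrm{bin}}(x)=x\bigl(\tfrac{1+x}{2}\bigr)^s$ is increasing on $[0,1]$, so each factor satisfies $\bigl(\tfrac{1+x_i}{2}\bigr)^s\le\bigl(\tfrac{1+2/3}{2}\bigr)^s=(5/6)^s$. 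Hence
\[
\phi_s^{\mathrm{bin}}(x_1)+\phi_s^{\mathrm{bin}}(x_2)\le(x_1+x_2)(5/6)^s\le(5/6)^s\le\tfrac43(5/6)^s .
\]
The stated constant $4/3$ is therefore slack. A cruder route that bounds each linear factor $x_i$ by $2/3$ separately gives exactly $2\cdot\tfrac23(5/6)^s=\tfrac43(5/6)^s$, which is presumably where the stated constant comes from. I will present that cruder version, since it mirrors Corollary~\ref{cor:power} and matches the statement, and remark that the sharper bound also holds. Taking the maximum over $t\in\mathcal T_m$ gives $\lambda_{\phi_s^{\mathrm{bin}},m}\le\tfrac43(5/6)^s$.

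For the degree threshold, the leakage promise~\eqref{eq:leakage-promise} holds as soon as $\tfrac43(5/6)^sB_n\le\varepsilon n/64$. This is equivalent to $(6/5)^s\ge 256B_n/(3\varepsilon n)$, which gives the first displayed condition on $s$ after taking $\log_{6/5}$ and a ceiling. For the simpler sufficient choice, I substitute $B_n\le n^2/2$, so that $256B_n/(3\varepsilon n)\le 128n/(3\varepsilon)$. Since $\log_{6/5}$ is monotone, $s\ge\lceil\log_{6/5}(128n/(3\varepsilon))\rceil$ suffices.

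There is no real obstacle. The only step needing care is checking the boundary indices of $\mathcal T_m=\{\lceil m/3\rceil,\ldots,\lfloor 2m/3\rfloor\}$ to confirm that $x_1,x_2\le 2/3$, which holds because $t-1<2m/3$ and $m-t\le 2m/3$. One should also note that the monotonicity of $\phi_s^{\mathrm{bin}}$ on $[0,1]$ is immediate, since it is a product of nonnegative increasing factors.
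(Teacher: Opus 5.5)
Your proposal is correct and is essentially the paper's argument: bound each central-split piece by $(2/3)(5/6)^s$ using the monotonicity of $x\bigl(\tfrac{1+x}{2}\bigr)^s$ and the fact that both normalized piece lengths are at most $2/3$, sum to get $\tfrac43(5/6)^s$, and rearrange the promise, using $B_n\le n^2/2$ for the simpler threshold. Your side remark that $x_1+x_2=(m-1)/m$ sharpens the leakage bound to $(5/6)^s$ is also correct, though the corollary does not need it.
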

\begin{proof}[Proof of Corollary~\ref{cor:binomial}]
For $x\le2/3$,
$x((1+x)/2)^s\le(2/3)(5/6)^s$.  Sum the two split pieces and rearrange.
\end{proof}

\subsection{Resource accounting}\label{sec:resources}

For a fixed power $r$, prepare $r$ independent uniform time registers and
compute their minimum and maximum by a reversible comparator tree.  For the
binomial weight, prepare $s$ Hadamard control bits, one mandatory time register,
and $s$ conditionally active uniform time registers.  The active count is
$1+\operatorname{Bin}(s,1/2)$ exactly.  With $m$ not a power of two, an
input-independent reversible rejection or amplitude-amplification circuit
prepares a uniform label in $[m]$ to any desired accuracy using polylogarithmic
gates; no graph data or QRAM is involved.

A window-law total-variation error $\zeta$ changes the target expectation by at
most $B_n\zeta$.  It therefore suffices to compile the sampling distribution
to $\zeta=O(\varepsilon n/B_n)=O(\varepsilon/n)$.  Inside amplitude estimation,
each coherent subroutine is compiled to operator error inverse-polynomial in
the deterministic total-use cap so that a hybrid bound fits the allocated
failure budget.

For both explicit weights, one window preparation uses
\[
C_\phi=\widetilde O(r_{\max}\log m)
\]
gates and $O(r_{\max}\log m)$ qubits; a parallel comparator tree has depth
$\widetilde O(\log r_{\max}\log m)$ at the cost of those registers.  For the
supplied-$K$ quantum algorithm, a conservative gate-work bound is
\[
\widetilde O\!\left[
\frac{\sqrt{m(K+n)}}\varepsilon(C_O+\log m)
+\frac{\sqrt{K+n}}\varepsilon C_\phi
+\frac{\sqrt n}{\varepsilon^2}
\right].
\]
The last term is the explicit-list membership overhead in the dyadic component
sampler.  A straightforward implementation uses live data space
\[
O\!\left(r_{\max}\log m+\varepsilon^{-1}\log n
+\polylog(nm/(\varepsilon\delta))\right)
\]
qubits, in addition to oracle workspace.  The sequential depth is at most the
gate-work bound; no parallel-depth separation is claimed.  Since
$0\le F_\phi\le B_n$, the classical scalar output needs
$O(\log(n/\varepsilon))$ bits.

\begin{proof}[Proof of Theorem~\ref{thm:statistic-nonrecovery}]
Let $A\in\mathbb Q^{k\times m}$ have
$A_{j\ell}=\phi_j(\ell/m)$.  Since $k<m$, choose a nonzero rational vector
$z\in\ker A$ and scale it to a nonzero integer vector.  Every entry of every
column of $A$ is positive because the weights have no constant term and
$\ell/m>0$.  Hence $z$ has both positive and negative entries.  Write
$z=z^+-z^-$ with distinct nonnegative integer vectors $z^+,z^-$.  Then
$Az^+=Az^-$, so these two positive-lifetime histograms give identical
weighted-statistic vectors.

Realize either histogram by keeping a spanning star at every time and assigning
one distinct optional leaf-to-leaf edge to each requested bar.  To realize a
bar of lifetime $\ell$, make its optional edge present on one contiguous block
of $\ell$ snapshots and absent elsewhere.  Fundamental cycles for distinct
optional edges are independent, and the intersection junctions make each such
cycle an interval summand on exactly that block. Choose enough vertices to
provide all required optional edges. The two resulting positive-lifetime
histograms are different but have equal weighted statistics.
\end{proof}

Exact power moments $S_1,\ldots,S_m$ do determine the positive-lifetime
histogram by Vandermonde inversion. This does not yield a useful free recovery theorem:
it requires $m$ scalar outputs, exact or sufficiently high-precision values,
and still cannot recover bar positions from lifetimes alone.

\section{Finite Sanity Checks}\label{app:checks}

As a guard against indexing, normalization, and promise-support mistakes, we
carried out the following exact-rational or conservative numerical checks:
\begin{enumerate}
\item For $m\le12$, direct enumeration of all $r$-tuples of time samples agrees
with $(\ell/m)^r$, and the uniform-window weight agrees exactly with
$(x+mx^2)/(m+1)$.
\item For $3\le m<80$, all central split weights for $2\le r\le8$ and the
uniform-window weight are at most $5/9$; the integer-rounding endpoints in the
logarithmic-degree bounds for power and binomial weights were checked
separately.
\item At boundary values $n=2^{14}$,
$\varepsilon\in\{1/n,1/128,1/8\}$, and
$K\in\{256n,512n,\lfloor B_n/8\rfloor\}$, the inequalities
$D\le K/64$, $p_+\le1/2$, $p_+-p_-\le p_-/2$, promise support, mean separation,
and the Chebyshev budget in Theorem~\ref{thm:low-density} all hold.
\item A dense grid of $L,p_-,p_+,q$ values satisfies the numerical KL upper
bounds $64\Delta^2/L$ before reveal and $16\Delta^2/p_-$ at reveal in
Proposition~\ref{prop:censored-kl}.
\item The dyadic reciprocal-size expansion has nonnegative tail at most
$2^{-J}$ for all $2\le n<200$ and the precision values used in the component
estimator.
\end{enumerate}
These checks are not substitutes for the proofs; they only provide finite
sanity tests for the parameter ranges stated above.

\section*{AI Disclosure}
We used ChatGPT and Claude to assist with literature review, brainstorming,
proof checking, and language polishing (Sections~1, 4, 9, and~10, and
Appendices~A--E).  The authors independently determined the research direction
and final presentation, verified all mathematical arguments and references,
and take full responsibility for the correctness, originality, and final
content.

\end{document}